\documentclass[aps,prd,reprint,superscriptaddress,nofootinbib,longbibliography]{revtex4-2}

\usepackage[T1]{fontenc}
\usepackage[utf8]{inputenc}
\usepackage{lmodern}
\usepackage{amsmath,amssymb,mathtools,amsthm,bm,booktabs}
\usepackage[colorlinks=true,linkcolor=blue,citecolor=blue,urlcolor=blue]{hyperref}
\usepackage[nameinlink,capitalise,noabbrev]{cleveref}
\usepackage{lipsum}
\usepackage{graphicx}

\newtheorem{proposition}{Proposition}

\newcommand{\TV}{d_{\mathrm{TV}}}
\newcommand{\KL}{D_{\mathrm{KL}}}
\newcommand{\I}{I}
\newcommand{\E}{\mathbb{E}}
\newcommand{\Prob}{\mathbb{P}}
\newcommand{\1}{\mathbf{1}}

\begin{document}

\title{Loophole-Robust Certification of Quantum Advantage}

\author{Prosanta Pal}
\affiliation{Department of Physics and Astronomy, Clemson University, Clemson SC, USA 29634}

\author{Gargee Sharma}
\affiliation{Indian Institute of Technology Delhi, Hauz Khas, New Delhi 110016, India}

\author{Ramakrishna Podila}
\email{rpodila@g.clemson.edu}
\affiliation{Department of Physics and Astronomy, Clemson University, Clemson SC, USA 29634}

\date{\today}

\begin{abstract}
Claims of quantum advantage should remain robust even when classical strategies have access to side information correlated with the benchmark under evaluation, just as Bell certification must account for measurement dependence. We formalize such correlations as benchmark dependence, a task-level generalization of measurement dependence. For every bounded-reward task, we show that the
optimal benchmark-dependent classical score obeys
\(S_\eta\leq\min\{1,S_{\mathrm{cl}}+\eta\}\), and construct a family of tasks that saturates this bound, showing that the linear dependence on \(\eta\) is tight without further assumptions. For repeated product tasks with roundwise dependence, we obtain the stronger multiplicative bound
\(S_\eta^{(n)}\leq(\omega_{\mathrm c}+\eta)^n\), and extend the framework to finite-sample data, mutual-information constraints, multipartite tasks, and correlations distributed along a causal path. Applying these results to aggregated IBM hardware data, we obtain positive raw-count cycle-product certificates of \(0.0812\) for CHSH and \(0.2178\) for
Mermin--GHZ, while the nine-context magic-square construction remains
uncertified; readout-mitigated values are reported separately as
sensitivity estimates. We also analyze a non-Bell quantum-kernel benchmark, where a label-construction variable has measured conditional dependence
\(\widehat{\eta}_{\lambda}^{(Y)}=0.5\), above the threshold
\(\eta_{\mathrm{req}}=0.375\), required to close the reported score gap, and yields perfect classical classification. The framework therefore converts a quantum--classical score separation into a quantitative lower bound on the benchmark-correlated classical information required to explain the score separation.
\end{abstract}

\maketitle

\section{Introduction}

Quantum advantage denotes a demonstrable separation between quantum and classical
strategies for the same task under a common resource model and success
criterion. In complexity-theoretic settings, this separation is asymptotic:
a family of quantum algorithms requires fewer resources than any classical
algorithm achieving comparable accuracy and success probability
\cite{HarrowMontanaro2017,AaronsonChen2017, preskill2012quantum}. The resource may be
runtime, memory, communication, query complexity, or sample complexity. By contrast, 
experimental demonstrations establish finite-instance separations
for a specific device, benchmark distribution, and verification metric, as in
random-circuit sampling or photon sampling \cite{Arute2019,zhong2020quantum,wu2021strong,madsen2022quantum}. Such claims inherit the assumptions underlying the benchmark’s classical hardness and remain sensitive to advances in classical simulation and verification \cite{huang2020classical}.
Practical quantum advantage imposes
the stronger requirement that the comparison include end-to-end costs and
the strongest available classical methods
\cite{DaleyEtAl2022,HibatAllah2024}. A quantum-advantage claim is therefore never absolute; it is defined relative to precise quantum and classical strategy classes and, crucially, to the information and side resources available to each.

Many problems in quantum information admit a direct score-based formulation~\cite{helstrom1969quantum,Cleve2004,buhrman2009non,Brunner2014}.
A benchmark instance is sampled from a prescribed distribution, a strategy
produces an output, and a bounded reward quantifies its performance. The
classical and quantum values are the optimal expected rewards over the
corresponding strategy classes. Nonlocal games provide the canonical
example: classical players are restricted to local response functions
assisted by shared randomness, whereas quantum players may additionally
share entanglement \cite{Brunner2014,Bell1964,Cleve2004,buhrman2009non}. Analogous quantum-classical performance gaps arise in quantum learning and distributed information processing, although the resource exhibiting the advantage differs across settings. For instance, quantum algorithms can provide provable computational speedups for specially constructed supervised-learning problems \cite{Liu2021}; quantum-enhanced measurement and coherent processing can reduce the number of experimental preparations required to infer properties of an unknown quantum system \cite{huang2020predicting,Huang2022}; and shared entanglement can reduce the communication or model resources required for certain distributed learning tasks \cite{buhrman1998quantum,ZhaoDeng2025}.
In each case, the attainable score depends on the computational and the data-access model, and any available auxiliary information. This is particularly consequential in quantum machine learning.
Quantum feature maps can define representations that are difficult to
evaluate classically \cite{Havlicek2019}, but computational hardness alone
does not imply predictive advantage. Training data may allow a classical
learner to approximate quantities that remain difficult to compute directly,
and the outcome of a quantum--classical comparison can change with the
assumed data-access model \cite{HuangBroughtonEtAl2021}. In general,
classical and quantum data access, feature representation, trainability, and generalization all enter the definition of the learning problem
\cite{CerezoEtAl2022}, motivating a reassessment of whether
outperforming a restricted set of classical models is, by itself, an
adequate criterion for quantum advantage \cite{SchuldKilloran2022}. 

Engineered benchmarks are particularly vulnerable to this ambiguity. If the labels are generated from a hidden quantity associated with the quantum representation, that same quantity may also give a classical model an easy route to the answer. This is related to data leakage and shortcut learning in classical machine learning, where a model exploits information or correlations that are not part of the intended task \cite{KaufmanEtAl2012,GeirhosEtAl2020}. The analogy is, however, not exact: benchmark-construction information need not be present
in the declared training features to constitute a classical side resource.
A score gap relative to the official inputs may therefore disappear once
such construction-side information is made available.

Bell nonlocality provides the established framework for quantifying such
assumption dependence. Bell's theorem excludes local hidden-variable models
under explicit causal and statistical assumptions \cite{Bell1964}, while
the Clauser--Horne--Shimony--Holt (CHSH) inequality converts this
incompatibility into an experimentally accessible bound \cite{CHSH1969}.
Loophole-free experiments demonstrated violations while controlling the
principal experimental routes to a local explanation
\cite{Hensen2015,Giustina2015,Shalm2015}. Relaxed Bell inequalities go
further by determining how the classical bound changes when an assumption
is weakened. In measurement-dependent models, the hidden variable may be
correlated with the measurement settings, and the strength of that
correlation becomes a quantifiable classical resource
\cite{Hall2010,Hall2011,ThinhSheridanScarani2013,
PutzRossetBarneaLiangGisin2014,PutzGisin2016,Friedman2019}. A Bell
violation can therefore be supplemented by a robustness statement: the
minimum measurement dependence required for a classical model to reproduce
the observed correlations.
The Bell-test perspective suggests a broader question: can the robustness
of a claimed quantum advantage be quantified when a classical strategy has
access to side information correlated with the evaluated benchmark?
Outside Bell scenarios, the relevant correlation may involve a query, test
example, distributed input, benchmark-selection rule, or latent
construction variable. Despite their different physical interpretations,
these cases share a common operational structure. A classical strategy is
evaluated on an instance \(z\), while its hidden state or side resource
\(\lambda\) is ordinarily assumed to be independent of \(z\). Relaxing
this assumption can increase the attainable classical score, and the
central certification question is how much dependence is required to
explain an observed quantum--classical separation.

Motivated by this correspondence, we introduce \emph{benchmark
dependence}, the task-level analogue of measurement dependence \cite{Hall2010}. For a
benchmark distribution \(q(z)\), let \(p(\lambda|z)\) denote the
hidden-variable distribution available to the classical strategy for
instance \(z\), and define the benchmark marginal
\(
p_0(\lambda)=\sum_z q(z)p(\lambda|z).
\)
We quantify benchmark dependence by
\(
\eta_{\mathrm{bm}}
=
\sup_z
d_{\mathrm{TV}}
\!\left(
p(\lambda|z),p_0(\lambda)
\right).
\)
The benchmark-independent model corresponds to
\(\eta_{\mathrm{bm}}=0\). When \(z\) is the full measurement-setting
tuple, this definition reduces to a posterior total-variation measure of
measurement dependence. For a general benchmark,
\(\eta_{\mathrm{bm}}\) quantifies the strength of any classical side
resource whose distribution depends on the evaluated instance.

Our central result is a universal relaxed advantage inequality. For any
distributed task with reward in \([0,1]\), the optimal classical score under
benchmark dependence of strength at most \(\eta\) satisfies
\(
S_{\eta}
\leq
\min\!\left\{1,S_{\mathrm{cl}}+\eta\right\},
\)
where \(S_{\mathrm{cl}}\) is the optimal benchmark-independent classical
value. Hence, an observed score \(S>S_{\mathrm{cl}}\) implies
\(
\eta_{\mathrm{bm}}
\geq
S-S_{\mathrm{cl}}
\)
for every benchmark-dependent classical surrogate capable of reproducing
that score. The bound is independent of the number of parties and of the
sizes of the input, output, and hidden-variable spaces. We also construct a family of tasks that saturates the inequality, proving
that the linear dependence on \(\eta\) is universally optimal. Benchmark
dependence can therefore translate directly into an equal increase in the
attainable classical score.

The variational estimate used in proving the additive bound is the standard
total-variation control of the expectation of a bounded reward, familiar
from robust statistics and statistical decision theory
\cite{Huber1964,LeCam1964,LeCamYang2000}. Our contribution is not this
probabilistic inequality in isolation, but its formulation as an
operational resource bound for quantum-advantage certification: the
dependence is assigned to a benchmark-correlated classical side resource,
the score is optimized over the resulting distributed strategy class, and
the bound is inverted to quantify the minimum resource required by a
classical explanation. The exact saturation result further shows that the
resulting unit coefficient cannot be improved for general bounded-reward
tasks.

This result establishes a common framework for relaxed Bell inequalities
and the broader certification of quantum advantage. For Bell-derived tasks,
it quantifies the measurement dependence required by a classical
explanation. For non-Bell tasks, it provides the same operational measure
for benchmark leakage and construction-side information. We complement the
universal bound with finite-sample certification, an average-case
mutual-information formulation, extensions to correlated and multipartite
settings, and explicit benchmark-dependent classical surrogates. We apply
the framework to CHSH, the Greenberger--Horne--Zeilinger
(GHZ)/Mermin game, and the Mermin--Peres magic-square game
\cite{Mermin1990,Peres1990}, using data from IBM Fez, Kingston, and
Marrakesh, and to a Havlíček-style quantum-kernel benchmark
\cite{Havlicek2019} in which construction-side variables can be directly
audited.

The framework separates two distinct questions. Improving the classical
algorithm changes the benchmark-independent value \(S_{\mathrm{cl}}\);
allowing information correlated with the evaluated instance changes the
classical resource model itself. Relaxed advantage inequalities quantify the
latter and convert an observed quantum--classical score gap into a
resource-sensitive robustness certificate: the minimum benchmark dependence
required for a classical explanation. In this way, we establish a task-level extension of the Bell-test loophole
calculus to the certification of quantum advantage in distributed information
processing and quantum machine learning.
\section{Methods}
\subsection{Framework}
This section formalizes quantum advantage in terms of task performance.
Let \(S\) be an observed or target quantum score for a distributed task,
and let \(S_{\mathrm{cl}}\) be the optimal benchmark-independent classical
value. Advantage corresponds to \(S>S_{\mathrm{cl}}\). Our goal is to
quantify the minimum benchmark-dependence strength \(\eta\) for which a
classical model can attain \(S_\eta\geq S\).

For each round of the task, let \(z\in\mathcal Z\) denote the benchmark instance, drawn from a
known distribution \(q(z)\). The instance \(z\) specifies the local inputs
\(x(z)\in\mathcal X\) and \(y(z)\in\mathcal Y\) provided to Alice and Bob, respectively. Alice
and Bob return outputs \(a\in\mathcal A\) and \(b\in\mathcal B\). Their performance on that
round is quantified by a reward function
\[
r:\mathcal Z\times\mathcal A\times\mathcal B \to [0,1],
\qquad
(z,a,b)\mapsto r(z,a,b),
\]
where \(r(z,a,b)=1\) corresponds to perfect success and \(r(z,a,b)=0\) to failure.

A strategy is specified by the conditional distribution \(p(a,b|z)\) of outputs given the
benchmark instance. Its average score is
\begin{equation}
S[p]:=\sum_{z\in\mathcal Z} q(z)\sum_{a\in\mathcal A}\sum_{b\in\mathcal B}
p(a,b|z)\,r(z,a,b).
\label{eq:score}
\end{equation}

We first consider loophole-free classical shared-randomness models. In this case, Alice and Bob
have access to a shared hidden variable \(\lambda\), distributed according to
\(p_0(\lambda)\), but \(\lambda\) is independent of the benchmark instance \(z\). Their local
response functions are \(p(a|x,\lambda)\) and \(p(b|y,\lambda)\), so the induced output
distribution is
\begin{equation}
p(a,b|z)=\sum_{\lambda} p_0(\lambda)\,p(a|x(z),\lambda)\,p(b|y(z),\lambda).
\label{eq:classical_ideal}
\end{equation}
The corresponding optimal loophole-free classical score is
\begin{equation}
S_{\mathrm{cl}}
:=
\sup_{p_0,\,p(a|x,\lambda),\,p(b|y,\lambda)} S[p].
\label{eq:Scl}
\end{equation}

To model benchmark dependence, we relax the assumption that the hidden-variable distribution is
independent of \(z\). We therefore allow
\begin{equation}
p(a,b|z)=\sum_{\lambda} p(\lambda|z)\,p(a|x(z),\lambda)\,p(b|y(z),\lambda).
\label{eq:classical_dep}
\end{equation}
As a reference, we define the benchmark marginal of the hidden variable by
\begin{equation}
p_0(\lambda):=\sum_{z\in\mathcal Z} q(z)\,p(\lambda|z).
\label{eq:benchmark_marginal}
\end{equation}
This leads to the benchmark-dependence parameter
\begin{equation}
\eta_{\mathrm{bm}}
:=
\sup_{z\in\mathcal Z}\TV\!\bigl(p(\lambda|z),p_0(\lambda)\bigr),
\label{eq:eta_bm}
\end{equation}
where the total-variation distance between two probability distribution $P(\lambda)$ and $Q(\lambda)$ is defined as
\begin{equation}
\TV(P,Q):=\frac12\sum_\lambda |P(\lambda)-Q(\lambda)|.
\label{eq:tv_def}
\end{equation}
We write \(S_\eta\) for the optimal score over all benchmark-dependent classical models of the
form \eqref{eq:classical_dep} satisfying \(\eta_{\mathrm{bm}}\le \eta\). Throughout, \(S_{\mathrm{cl}}\) denotes the benchmark-independent
classical optimum and \(S_\eta\) denotes the classical optimum under the
resource constraint \(\eta_{\mathrm{bm}}\leq\eta\). We reserve
\(\eta_{\mathrm{req}}\) for the dependence required to reproduce a
specified point-estimate score, and
\(\eta_{\min}^{\mathrm{cert}}(\alpha)\) for a finite-sample lower
confidence bound. A hat denotes an empirical quantity computed from a
finite benchmark, while a tilde denotes a readout-corrected point estimate
that is not assigned exact binomial coverage.

A second dependence measure, useful for comparison with the measurement-dependence literature, is
the pairwise parameter
\begin{equation}
\eta_{\mathrm{pair}}
:=
\sup_{z,z'\in\mathcal Z}\TV\!\bigl(p(\lambda|z),p(\lambda|z')\bigr).
\label{eq:eta_pair}
\end{equation}
These two quantities are equivalent up to constants:
\begin{equation}
\eta_{\mathrm{bm}}\le \eta_{\mathrm{pair}}\le 2\,\eta_{\mathrm{bm}}.
\label{eq:eta_compare}
\end{equation}
The left inequality follows because \(p_0\) is a convex mixture of the distributions
\(p(\lambda|z)\), while the right inequality is a direct consequence of the triangle inequality. 
\paragraph*{Relation to measurement dependence:}
In a Bell scenario, the benchmark instance \(z\) is simply the full measurement-setting tuple, so benchmark dependence becomes measurement dependence in the usual Bell sense. Our parameter \(\eta_{\mathrm{bm}}\) is a posterior worst-case measure, since it quantifies how much the hidden-variable law \(p(\lambda|z)\) can deviate from the benchmark marginal \(p_0(\lambda)\). It is therefore most directly comparable to total-variation formulations of measurement dependence such as those of Hall~\cite{Hall2010,Hall2011}. By contrast, the framework of measurement-dependent locality, min-entropy constraints, and Santha--Vazirani source assumptions are typically formulated as constraints on \(p(z|\lambda)\) rather than on \(p(\lambda|z)\)~\cite{PutzRossetBarneaLiangGisin2014,PutzGisin2016,ThinhSheridanScarani2013,SanthaVazirani1986}. The two descriptions are related by Bayes' rule once the benchmark distribution \(q(z)\) is fixed, but the translation is generally one-way rather than an equivalence. With Hall's pairwise \(L^1\) measurement-dependence parameter \(M\) ~\cite{Hall2010,Hall2011}, one has
\begin{equation}
\frac{M}{4}\le \eta_{\mathrm{bm}}\le \frac{M}{2},
\label{eq:Hall_translation}
\end{equation}
since \(M=2\eta_{\mathrm{pair}}\) and \(\eta_{\mathrm{bm}}\le \eta_{\mathrm{pair}}\le 2\eta_{\mathrm{bm}}\).

\paragraph*{Relation to statistical decision theory:}
The proof of the universal bound uses the standard fact that, for
\(0\leq f\leq 1\),
\begin{equation}
\left|
\sum_{\lambda}
\bigl[P(\lambda)-Q(\lambda)\bigr]f(\lambda)
\right|
\leq
d_{\mathrm{TV}}(P,Q).
\end{equation}
This stability of bounded risks under total-variation perturbations is
well established in robust statistics and statistical decision theory
\cite{Huber1964,LeCam1964,LeCamYang2000}. Here, however, the perturbation
acts on the conditional law of a classical side resource,
\(p(\lambda|z)\), rather than directly on the benchmark distribution.
Optimizing over the associated distributed response functions converts
the statistical inequality into a task-level bound on the classical value
and, upon inversion, into a robustness certificate for quantum advantage.

Finally, note that in the loophole-free model one has \(p(\lambda|z)=p_0(\lambda)\) for all
\(z\). Hence fixing \(p_0\) to the benchmark marginal does not change the definition of
\(S_{\mathrm{cl}}\): any benchmark-independent optimizer already coincides with its own
benchmark marginal.
\subsection{Universal relaxed inequality, exact saturation, and optimality}

\paragraph*{General relaxed advantage inequality:} Any classical model whose hidden variables are allowed to depend on the task instance can improve its score by at most \(\eta\) over the optimal ideal value. 

\begin{proof} 
Let
\begin{equation}
S_{\eta}
:=
\sup_{\substack{
p(\lambda|z),\,p(a|x,\lambda),\,p(b|y,\lambda)\\
\eta_{\mathrm{bm}}\le \eta
}}
S[p],
\label{eq:Seta_def}
\end{equation}
denote the optimal score achievable by any benchmark-dependent classical model whose dependence strength does not exceed \(\eta\). We prove that for every bounded-reward distributed task,
\begin{equation}
S_{\eta} \le S_{\mathrm{cl}}+\eta.
\label{eq:general_bound}
\end{equation}

Since all task scores lie in \([0,1]\), ~\eqref{eq:general_bound} may be written in the universally valid sharpened form
\begin{equation}
S_\eta \le \min\{1,\;S_{\mathrm{cl}}+\eta\}.
\end{equation}
This ceiling becomes active only for \(\eta \ge 1-S_{\mathrm{cl}}\). It therefore clarifies the saturation regime, especially for tasks with large classical benchmark \(S_{\mathrm{cl}}\), but it does not change the certified lower bound obtained from any imperfect observed score \(Q<1\), for which the inversion remains \(\eta \ge \max\{0,Q-S_{\mathrm{cl}}\}\).

Fix any admissible benchmark-dependent model. Define
\begin{equation}
f_z(\lambda):=\sum_{a,b} p(a|x(z),\lambda)\,p(b|y(z),\lambda)\,r(z,a,b).
\label{eq:fz}
\end{equation}
Since \(0\le r\le 1\), one has \(0\le f_z(\lambda)\le 1\). Hence
\begin{equation}
S=\sum_z q(z)\sum_\lambda p(\lambda|z)\,f_z(\lambda).
\end{equation}
Add and subtract \(p_0(\lambda)\):
\begin{equation}
\begin{aligned}
S &=
\sum_z q(z)\sum_\lambda p_0(\lambda)\,f_z(\lambda)
\\\qquad &+
\sum_z q(z)\sum_\lambda [p(\lambda|z)-p_0(\lambda)]f_z(\lambda).
\label{eq:addsub}
\end{aligned}
\end{equation}
The first term is at most \(S_{\mathrm{cl}}\) by \eqref{eq:Scl}. For the second term, the variational bound for \(0\le f\le 1\) gives
\begin{equation}
\left|
\sum_\lambda (P(\lambda)-Q(\lambda))f(\lambda)
\right|
\le \TV(P,Q).
\label{eq:varbound}
\end{equation}
Applying \eqref{eq:varbound} with \(P(\lambda)=p(\lambda|z)\), \(Q(\lambda)=p_0(\lambda)\), and \(f=f_z\), we obtain
\begin{equation}
\sum_\lambda [p(\lambda|z)-p_0(\lambda)]f_z(\lambda)\le \eta.
\end{equation}
Averaging over \(z\) yields \(S\le S_{\mathrm{cl}}+\eta\). Taking the supremum over all admissible models proves \eqref{eq:general_bound}.
\end{proof}

For $n-$repeated tasks with average score
\begin{equation}
\bar S_n:=\frac1n\sum_{t=1}^n \E[r_t],
\end{equation}
the general relaxed advantage inequality immediately gives
\begin{equation}
\bar S_{n,\eta}\le \bar S_{\mathrm{cl},n}+\eta.
\label{eq:average_bound}
\end{equation}
If roundwise dependence parameters \(\eta_t\) are used, then
\begin{equation}
\bar S_{n,\eta_1,\dots,\eta_n}
\le
\bar S_{\mathrm{cl},n}+\frac1n\sum_{t=1}^n \eta_t.
\label{eq:average_roundwise}
\end{equation}

\paragraph*{Exact saturation:} The relaxed inequality~\eqref{eq:general_bound} is not only valid but, in general, tight: there exist tasks for which
benchmark dependence increases the achievable classical score by exactly \(\eta\).

\begin{proof}
Let \(m\) denote the number of possible hidden task instances with $m\ge2$. Consider a task where the role of Bob is trivial, \(\mathcal B=\{0\}\), which is expressed as \(\mathcal Z=[m]:=\{1,\dots,m\},\qquad q(z)=\frac1m,\qquad \mathcal A=[m],\qquad \mathcal B=\{0\},
\) and reward
\begin{equation}
r(z,a,b)=\1\{a=z\}.
\label{eq:hidden_reward}
\end{equation}

Then
\begin{equation}
S_{\mathrm{cl}}=\frac1m,
\qquad
S_\eta=\min\!\left\{1,\frac1m+\eta\right\}.
\label{eq:hidden_exact}
\end{equation}

In the loophole-free model, the output law \(p(a)\) is independent of \(z\), so
\begin{equation}
S=\frac1m\sum_{z=1}^m p(a=z)=\frac1m\sum_{a=1}^m p(a)=\frac1m.
\end{equation}
Thus \(S_{\mathrm{cl}}=1/m\). Using  equation ~\eqref{eq:general_bound},
\begin{equation}
S_\eta\le \min\!\left\{1,\frac1m+\eta\right\}.
\label{eq:hidden_upper}
\end{equation}

Equation~\eqref{eq:hidden_upper} gives the required upper bound on \(S_\eta\). To prove
equality in \eqref{eq:hidden_exact}, we now construct explicit admissible models that achieve
this bound.

For \(0\le \eta \le 1-\frac1m\), let the reference hidden-variable distribution be uniform, $p_0(\lambda)=\frac1m, \lambda\in[m],$
and let Alice output the hidden variable deterministically $a=\lambda$. For each \(z\in[m]\), define the benchmark-dependent hidden-variable law by
\begin{equation}
p(\lambda|z)=
\begin{cases}
\frac1m+\eta, & \lambda=z,\\[1ex]
\frac1m-\dfrac{\eta}{m-1}, & \lambda\neq z.
\end{cases}
\label{eq:hidden_constr1}
\end{equation}
This construction shifts probability mass \(\eta\) toward the correct label \(\lambda=z\) while
redistributing the same total amount uniformly over the remaining \(m-1\) values, so that
\(p(\lambda|z)\) remains normalized. The condition \(0\le \eta \le 1-\frac1m\) ensures that all
probabilities remain nonnegative.

Its total-variation distance from the uniform reference is
\begin{equation}
\begin{split}
\TV\!\bigl(p(\lambda|z),p_0(\lambda)\bigr)
&=
\frac12
\left|\left(\frac1m+\eta\right)-\frac1m\right|
+\\& \frac12
(m-1)\left|
\left(\frac1m-\frac{\eta}{m-1}\right)-\frac1m
\right|
\nonumber\\
&=
\frac12\left(\eta+(m-1)\frac{\eta}{m-1}\right)
=\eta.
\label{eq:hidden_tv}
\end{split}
\end{equation}
Since the output rule is \(a=\lambda\), the strategy succeeds exactly when \(\lambda=z\). Hence
\[
p(a=z|z)=p(\lambda=z|z)=\frac1m+\eta,
\]
and therefore the average score is
\[
S=\frac1m+\eta.
\]

For \(\eta\ge 1-\frac1m\), choose again \(p_0(\lambda)=1/m\), \(a=\lambda\), and
\begin{equation}
p(\lambda|z)=\1\{\lambda=z\}.
\label{eq:hidden_constr2}
\end{equation}
Then \(\TV(p(\lambda|z),p_0)=1-\frac1m\le \eta\), and \(S=1\). Combining both constructions with \eqref{eq:hidden_upper} proves \eqref{eq:hidden_exact}.
\end{proof}

\paragraph*{Universal optimality:}
\label{cor:nosublinear} No universal refinement can reduce the linear \(\eta\) penalty: for some tasks, benchmark dependence translates directly into an equal increase in achievable score. In other words, there is no universal bound of the form
\begin{equation}
S_\eta \le S_{\mathrm{cl}}+g(\eta)
\end{equation}
valid for all bounded-reward tasks with \(g(\eta)=o(\eta)\) as \(\eta\to 0\). The same hidden-instance task, repeated independently and scored by an average of rounds, saturates the additive average-score bound \eqref{eq:average_roundwise}. Thus the additive dependence in \eqref{eq:average_roundwise} is also generically optimal.

\section{Results}
\subsection{Repeated Bell-derived tasks}

The universal bound~\eqref{eq:general_bound} is intentionally model-agnostic. For structured repeated tasks, however, that generality can be traded for sharper certification: the task architecture itself constrains how benchmark dependence can enhance classical performance, leading to explicit loophole thresholds for reproducing a target score.

Let \(G\) be a two-party primitive with instance set
\(\mathcal Z_1\), distribution \(q_1\), reward
\(r_1(z,a,b)\in[0,1]\), and one-round loophole-free classical value
\(\omega_{\mathrm c}\), equal to \(S_{\mathrm{cl}}\) for this primitive:
\begin{equation}
\begin{aligned}
\omega_{\mathrm c}
:=
\sup_{p_0,\,p(a|x,\lambda),\,p(b|y,\lambda)}
\sum_{z\in\mathcal Z_1}q_1(z)
\sum_{\lambda}p_0(\lambda)
\sum_{a,b}\,
&p(a|x(z),\lambda)
\\[-1mm]
{}\times&
p(b|y(z),\lambda)\,
r_1(z,a,b).
\end{aligned}
\label{eq:omega_c}
\end{equation}
Define the \(n\)-round product task by
\begin{equation}
\begin{aligned}
q_n(z_1,\dots,z_n):=\prod_{i=1}^n q_1(z_i),
\\
r_n(z,a,b):=\prod_{i=1}^n r_1(z_i,a_i,b_i).
\label{eq:product_task}
\end{aligned}
\end{equation}

Product scoring is not the default benchmark in noisy Bell experiments, where one usually tracks average win rates. It is, however, the natural figure of merit when repeated instances are bundled into a single block that is accepted only if every constituent round succeeds. In that exact-certification regime, one defines the block reward by \(r_n=\prod_{i=1}^n r_1^{(i)}\), so the score is precisely the probability of flawless completion of the full repeated task.

We consider the roundwise benchmark-dependent classical model
\begin{equation}
p(a,b|z)
=
\sum_{\lambda_1,\dots,\lambda_n}
\prod_{i=1}^n p(\lambda_i|z_i)\,p(a_i|x(z_i),\lambda_i)\,p(b_i|y(z_i),\lambda_i),
\label{eq:roundwise_model}
\end{equation}
with roundwise benchmark-dependence parameters
\begin{equation}
\eta_i:=\sup_{z_i\in\mathcal Z_1}\TV\!\bigl(p(\lambda_i|z_i),p_{0,i}(\lambda_i)\bigr).
\label{eq:eta_i}
\end{equation}

\paragraph*{Task-specific sharpening for roundwise benchmark dependence:}
\label{thm:product}
For the \(n\)-round product task \eqref{eq:product_task} under the roundwise model \eqref{eq:roundwise_model}, we posit that 
\begin{equation}
S_{\eta_1,\dots,\eta_n}^{(n)}
\le
\prod_{i=1}^n (\omega_{\mathrm c}+\eta_i).
\label{eq:product_bound}
\end{equation}
In particular, if \(\eta_i=\eta\) for all \(i\), then
\begin{equation}
S_{\eta}^{(n)}\le (\omega_{\mathrm c}+\eta)^n.
\label{eq:product_bound_uniform}
\end{equation}
Operationally,~\eqref{eq:product_bound} converts a measured 
$n$-round score into a direct lower bound on the benchmark dependence that any classical surrogate would need in order to reproduce it.
\begin{proof}
Using \eqref{eq:roundwise_model} and \eqref{eq:product_task},
\begin{equation}
\begin{aligned}
S^{(n)}
&=
\sum_{\mathbf z,\mathbf a,\mathbf b,\boldsymbol\lambda}
\prod_{i=1}^n
q_1(z_i)\,
p(\lambda_i|z_i)\,\\&
p(a_i|x(z_i),\lambda_i)\,
p(b_i|y(z_i),\lambda_i)\,
r_1(z_i,a_i,b_i)
\nonumber\\
&=
\prod_{i=1}^n
\sum_{z_i,a_i,b_i,\lambda_i}
q_1(z_i)\,
p(\lambda_i|z_i)\,\\&
p(a_i|x(z_i),\lambda_i)\,
p(b_i|y(z_i),\lambda_i)\,
r_1(z_i,a_i,b_i)
,
\label{eq:factor_score}
\end{aligned}
\end{equation}
where \(\mathbf z=(z_1,\dots,z_n)\), \(\mathbf a=(a_1,\dots,a_n)\), \(\mathbf b=(b_1,\dots,b_n)\), and \(\boldsymbol\lambda=(\lambda_1,\dots,\lambda_n)\).
Each bracket is a one-round benchmark-dependent classical score. By ~\eqref{eq:general_bound}, it is at most \(\omega_{\mathrm c}+\eta_i\). Therefore
\begin{equation}
S^{(n)}\le \prod_{i=1}^n (\omega_{\mathrm c}+\eta_i),
\end{equation}
which proves \eqref{eq:product_bound}. Equation \eqref{eq:product_bound_uniform} is the special case \(\eta_i=\eta\).
\end{proof}

\paragraph*{Exact saturation of the multiplicative bound:}
\label{thm:product_tight}
Under the assumptions of~\eqref{eq:product_bound}, the multiplicative bound is exact in general. Specifically, for the hidden-instance identification primitive with one-round value \(S_{\mathrm{cl}}=\omega_{\mathrm c}=1/m\), the \(n\)-round product task satisfies
\begin{equation}
S_{\eta}^{(n)}=(\omega_{\mathrm c}+\eta)^n
\qquad
\text{for } 0\le \eta \le 1-\omega_{\mathrm c}.
\label{eq:product_exact}
\end{equation}

\begin{proof}
Use the one-round saturation construction of~\eqref{eq:hidden_exact} independently in each round. Since the model and reward factorize across rounds, the total score factorizes as in \eqref{eq:factor_score}, and each one-round factor equals \(\omega_{\mathrm c}+\eta\). Thus \(S_\eta^{(n)}=(\omega_{\mathrm c}+\eta)^n\).
\end{proof}

Inter-round hidden-state correlations can invalidate the factorization
in Eq.~\eqref{eq:factor_score}; therefore,
Eqs.~\eqref{eq:product_bound} and \eqref{eq:product_exact} are specific
to roundwise benchmark dependence. In contrast, the additive average-score bound \eqref{eq:average_bound} remains valid without any factorization assumption.

\paragraph*{Pathwise benchmark dependence under correlated rounds:}
\label{prop:pathwise}
The pathwise formulation trades sharpness for generality: by absorbing the entire hidden trajectory into a single augmented hidden variable, it yields a loophole-robust bound that remains valid even when the hidden dynamics are arbitrarily correlated across rounds. Let \(\Lambda_{1:n}\) be the full hidden-state trajectory and define the pathwise dependence parameter
\begin{equation}
\eta_{\mathrm{path}}
:=
\sup_{z_{1:n}}
\TV\!\bigl(p(\Lambda_{1:n}|z_{1:n}),p_0(\Lambda_{1:n})\bigr),
\label{eq:eta_path}
\end{equation}
where \(p_0(\Lambda_{1:n})=\sum_{z_{1:n}}q_n(z_{1:n})p(\Lambda_{1:n}|z_{1:n})\). Then for any \(n\)-round bounded-reward task with any certified loophole-free classical bound \(C_n\),
\begin{equation}
S^{(n)}\le C_n+\eta_{\mathrm{path}}.
\label{eq:pathwise_bound}
\end{equation}
Equation~\eqref{eq:pathwise_bound} is simply the general relaxed inequality applied to the augmented hidden variable \(\Lambda_{1:n}\) and the full \(n\)-round task regarded as a single bounded-reward instance.

This bound is especially useful when the hidden process is not roundwise
factorized. As a concrete example, consider a benchmark-dependent Markov
model of the form
\begin{equation}
p(\Lambda_{1:n}|z_{1:n})
=
\mu_{z_1}(\lambda_1)
\prod_{t=2}^{n}
K_t^{z_t}(\lambda_{t-1},\lambda_t),
\label{eq:benchmark_markov_model}
\end{equation}
where \(\mu_{z_1}\) is the initial distribution of the hidden state
\(\lambda_1\), conditioned on the first benchmark instance \(z_1\), and
\(K_t^{z_t}(\lambda_{t-1},\lambda_t)\) is the transition kernel from
\(\lambda_{t-1}\) to \(\lambda_t\) at round \(t\), allowed to depend on
the contemporaneous benchmark instance \(z_t\). Define the one-step
benchmark sensitivities
\begin{align}
\varepsilon_1
&:=
\sup_{u,v}
\TV(\mu_u,\mu_v),
\label{eq:epsilon_initial}
\\
\varepsilon_t
&:=
\sup_{u,v,\lambda}
\TV\!\left(
K_t^{u}(\lambda,\cdot),
K_t^{v}(\lambda,\cdot)
\right),
\qquad t\geq 2.
\label{eq:epsilon_kernel}
\end{align}

\begin{proposition}[Sequential path-coupling bound]
\label{prop:sequential_path_coupling}
For the benchmark-dependent Markov model
\eqref{eq:benchmark_markov_model}, the pathwise dependence satisfies
\begin{equation}
\eta_{\mathrm{path}}
\leq
1-(1-\varepsilon_1)
\prod_{t=2}^{n}(1-\varepsilon_t).
\label{eq:pathwise_coupling_bound}
\end{equation}
Consequently,
\begin{equation}
\eta_{\mathrm{path}}
\leq
\min\!\left\{
1,\,
\varepsilon_1+\sum_{t=2}^{n}\varepsilon_t
\right\}.
\label{eq:pathwise_union_bound}
\end{equation}
\end{proposition}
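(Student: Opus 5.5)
The plan is to reduce $\eta_{\mathrm{path}}$ to a pairwise distance between path laws, as in \eqref{eq:eta_compare}, and then bound that pairwise distance with a sequential coupling of the two Markov chains.

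\emph{Step 1: reduction to a pairwise distance.} Since $p_0(\Lambda_{1:n})=\sum_{z'_{1:n}}q_n(z'_{1:n})\,p(\Lambda_{1:n}|z'_{1:n})$ is a convex mixture, joint convexity of $\TV$ gives, for every $z_{1:n}$,
\[
\TV\bigl(p(\cdot|z_{1:n}),p_0\bigr)\le\sum_{z'_{1:n}}q_n(z'_{1:n})\,\TV\bigl(p(\cdot|z_{1:n}),p(\cdot|z'_{1:n})\bigr).
\]
This is the left inequality of \eqref{eq:eta_compare}, now applied to whole trajectories. It therefore suffices to show that for any two benchmark sequences $u_{1:n},v_{1:n}$,
\[
\TV\bigl(p(\cdot|u_{1:n}),p(\cdot|v_{1:n})\bigr)\le 1-(1-\varepsilon_1)\prod_{t=2}^n(1-\varepsilon_t).
\]

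\emph{Step 2: the sequential coupling.} I would build random trajectories $\Lambda_{1:n}\sim p(\cdot|u_{1:n})$ and $\Lambda'_{1:n}\sim p(\cdot|v_{1:n})$ on a common probability space, round by round.
\begin{itemize}
\item At $t=1$, draw $(\lambda_1,\lambda'_1)$ from a maximal coupling of $\mu_{u_1}$ and $\mu_{v_1}$. Then $\Prob(\lambda_1=\lambda'_1)=1-\TV(\mu_{u_1},\mu_{v_1})\ge 1-\varepsilon_1$ by \eqref{eq:epsilon_initial}.
\item At step $t\ge2$, suppose the paths have agreed so far, with common value $\lambda_{t-1}=\lambda'_{t-1}=\lambda$. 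Draw $(\lambda_t,\lambda'_t)$ from a maximal coupling of $K_t^{u_t}(\lambda,\cdot)$ and $K_t^{v_t}(\lambda,\cdot)$. The conditional agreement probability is then at least $1-\varepsilon_t$ by \eqref{eq:epsilon_kernel}; the supremum over $\lambda$ in that definition is exactly what makes this uniform in the shared state.
\item If the paths have already separated, draw the next states independently from $K_t^{u_t}(\lambda_{t-1},\cdot)$ and $K_t^{v_t}(\lambda'_{t-1},\cdot)$.
\end{itemize}
In every case each marginal evolves by its own kernel, so the marginals are exactly the Markov laws \eqref{eq:benchmark_markov_model}.

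\emph{Step 3: probability of full agreement.} Chaining the conditional agreement probabilities gives
\[
\Prob(\Lambda_{1:n}=\Lambda'_{1:n})\ge(1-\varepsilon_1)\prod_{t=2}^n(1-\varepsilon_t).
\]

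\emph{Step 4: conclude \eqref{eq:pathwise_coupling_bound}.} The coupling inequality $\TV(P,Q)\le\Prob(X\neq Y)$, valid for any coupling of $P$ and $Q$, gives the pairwise bound from Step 1. Substituting it into the convex average of Step 1 yields \eqref{eq:pathwise_coupling_bound}.

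\emph{Step 5: the union-bound form \eqref{eq:pathwise_union_bound}.} This follows from the elementary inequality $1-\prod_t(1-\varepsilon_t)\le\sum_t\varepsilon_t$ for $\varepsilon_t\in[0,1]$, proved by induction on $n$. Together with the trivial bound $\TV\le1$ this gives the stated minimum.

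The main obstacle is Step 2. The coupling must be built so that each marginal is exactly the Markov law with its own benchmark-dependent kernels, and so that the conditional agreement bound holds uniformly in the shared past state. Everything else is standard. On finite hidden-variable spaces, as throughout the paper, maximal couplings exist explicitly: put mass $\min(P,Q)$ on the diagonal and distribute the residuals as a product. So no measurability issues arise.
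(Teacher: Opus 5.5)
Your proposal is correct and matches the paper's proof step for step: a sequential maximal coupling of the two path laws, the coupling inequality for the pairwise bound, convexity of total variation over the mixture $p_0$, and the elementary inequality $1-\prod_t(1-\varepsilon_t)\le\sum_t\varepsilon_t$. You also specify that the two chains evolve independently under their own kernels once they have separated; the paper leaves this implicit, and it is what guarantees the coupled marginals are exactly the Markov path laws.
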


\begin{proof}
For each benchmark sequence \(z_{1:n}\), let
\[
P_{z_{1:n}}
:=
p(\Lambda_{1:n}|z_{1:n})
\]
denote the corresponding path law. Consider two arbitrary benchmark
sequences \(u_{1:n}\) and \(v_{1:n}\). We construct a coupling
\((\Lambda_{1:n},\Lambda'_{1:n})\) of
\(P_{u_{1:n}}\) and \(P_{v_{1:n}}\) sequentially.

At the first round, maximally couple
\(\mu_{u_1}\) and \(\mu_{v_1}\). By the definition of
\(\varepsilon_1\),
\begin{equation}
\Prob(\Lambda_1=\Lambda'_1)
\geq
1-\varepsilon_1.
\label{eq:first_round_coupling}
\end{equation}
Suppose that the two trajectories agree through round \(t-1\), with
common state
\(\Lambda_{t-1}=\Lambda'_{t-1}=\lambda\). At round \(t\), maximally
couple
\[
K_t^{u_t}(\lambda,\cdot)
\quad\text{and}\quad
K_t^{v_t}(\lambda,\cdot).
\]
Equation~\eqref{eq:epsilon_kernel} then gives
\begin{equation}
\Prob\!\left(
\Lambda_t=\Lambda'_t
\,\middle|\,
\Lambda_{1:t-1}=\Lambda'_{1:t-1}
\right)
\geq
1-\varepsilon_t.
\label{eq:conditional_step_coupling}
\end{equation}
Multiplying these conditional agreement probabilities yields
\begin{equation}
\Prob(\Lambda_{1:n}=\Lambda'_{1:n})
\geq
(1-\varepsilon_1)
\prod_{t=2}^{n}(1-\varepsilon_t).
\label{eq:path_agreement_probability}
\end{equation}
The coupling characterization of total variation therefore implies
\begin{equation}
\TV\!\left(
P_{u_{1:n}},
P_{v_{1:n}}
\right)
\leq
1-(1-\varepsilon_1)
\prod_{t=2}^{n}(1-\varepsilon_t).
\label{eq:pairwise_path_tv}
\end{equation}

The benchmark marginal is the mixture
\[
p_0(\Lambda_{1:n})
=
\sum_{v_{1:n}}
q_n(v_{1:n})P_{v_{1:n}}.
\]
By convexity of total variation in its second argument,
\begin{align}
\TV\!\left(
P_{u_{1:n}},
p_0
\right)
&\leq
\sum_{v_{1:n}}
q_n(v_{1:n})
\TV\!\left(
P_{u_{1:n}},
P_{v_{1:n}}
\right)
\nonumber\\
&\leq
1-(1-\varepsilon_1)
\prod_{t=2}^{n}(1-\varepsilon_t).
\end{align}
Taking the supremum over \(u_{1:n}\) proves
Eq.~\eqref{eq:pathwise_coupling_bound}. Finally,
\[
1-\prod_{t=1}^{n}(1-\varepsilon_t)
\leq
\sum_{t=1}^{n}\varepsilon_t,
\]
which, together with \(\eta_{\mathrm{path}}\leq 1\), proves
Eq.~\eqref{eq:pathwise_union_bound}.
\end{proof}

A useful sufficient condition can be expressed through common
minorization measures. Suppose that the initial distributions and
transition kernels satisfy
\begin{align}
\mu_z(\cdot)
&\geq
\alpha_1\nu_1(\cdot)
\qquad
\text{for all }z,
\label{eq:initial_minorization}
\\
K_t^z(\lambda,\cdot)
&\geq
\alpha_t\nu_t(\cdot)
\qquad
\text{for all }z,\lambda,\quad t\geq 2,
\label{eq:kernel_minorization}
\end{align}
where \(0\leq\alpha_t\leq 1\) and each \(\nu_t\) is a probability
distribution. Any two distributions sharing a common component of
weight \(\alpha_t\) have total-variation distance at most
\(1-\alpha_t\). Hence
\[
\varepsilon_t\leq 1-\alpha_t
\qquad
(t=1,\ldots,n),
\]
and Proposition~\ref{prop:sequential_path_coupling} gives
\begin{equation}
\eta_{\mathrm{path}}
\leq
1-\prod_{t=1}^{n}\alpha_t.
\label{eq:pathwise_minorization_bound}
\end{equation}

These estimates control pathwise benchmark dependence using local
deviations of the initial distribution and transition kernels. The
result is intentionally conservative because it concerns the law of the
entire trajectory: once two paths disagree at an early round, that
discrepancy remains recorded in \(\Lambda_{1:n}\). Dobrushin-type
contraction coefficients can sharpen corresponding bounds for the
current-state or suffix distributions, but they do not erase
discrepancies already contained in the full path
\cite{DoucMoulinesRosenthal2004,Mitrophanov2005,
HairerMattingly2011,RudolfSchweizer2018}. Thus,
Eq.~\eqref{eq:pathwise_bound} complements the sharper roundwise and
average-score bounds when inter-round hidden-state correlations cannot
be neglected.
\subsection{Finite-sample certification}

Suppose a task is implemented in \(N\) independent trials (or shots) under fixed conditions producing i.i.d.\ bounded scores \(W_1,\dots,W_N\in[0,1]\), with empirical mean
\begin{equation}
\widehat S_N:=\frac1N\sum_{j=1}^N W_j.
\label{eq:Shat}
\end{equation}
By Hoeffding's inequality, with probability at least \(1-\alpha\), the true mean score \(S\) satisfies
\begin{equation}
S \ge L_N(\alpha)
:=
\widehat S_N-\sqrt{\frac{\log(1/\alpha)}{2N}},
\label{eq:hoeffding_lcb}
\end{equation}
where \(L_N(\alpha)\) is the corresponding one-sided Hoeffding lower confidence bound.

Therefore, if the relevant classical benchmark is \(C\), then for any candidate benchmark-dependence level \(\eta_0\), a one-sided level-\(\alpha\) test of the null hypothesis
\begin{equation}
H_0(\eta_0):\quad S \le C+\eta_0
\label{eq:null_eta0}
\end{equation}
rejects whenever
\begin{equation}
L_N(\alpha)>C+\eta_0.
\label{eq:test_rule}
\end{equation}
Equivalently, a certified lower bound on the loophole strength at confidence \(1-\alpha\) is
\begin{equation}
\eta_{\min}^{\mathrm{cert}}(\alpha)
:=
\max\!\left\{0,\;L_N(\alpha)-C\right\}.
\label{eq:eta_cert}
\end{equation}
If \(W_j\in\{0,1\}\), one may replace the Hoeffding bound by the exact one-sided Clopper--Pearson lower confidence limit \cite{ClopperPearson1934, BrownCaiDasGupta2001}. Writing \(K=\sum_{j=1}^N W_j\), the corresponding \((1-\alpha)\) lower bound on the true success probability \(S\) is
\begin{equation}
L_N(\alpha)=
\begin{cases}
0, & K=0,\\[4pt]
\mathrm{B}^{-1}\!\bigl(\alpha;\,K,\,N-K+1\bigr), & K\ge 1,
\end{cases}
\label{eq:cp_lower}
\end{equation}
where \(\mathrm{B}^{-1}(\alpha;a,b)\) is the \(\alpha\)-quantile of the \(\mathrm{Beta}(a,b)\) distribution.
For product-task scores obeying \eqref{eq:product_bound_uniform}, a lower confidence bound \(L_N(\alpha)\) on the observed product score yields
\begin{equation}
\eta_{\min}^{\mathrm{cert}}(\alpha)
=
\max\!\left\{0,\;L_N(\alpha)^{1/n}-\omega_{\mathrm c}\right\}.
\label{eq:eta_cert_product}
\end{equation}

\subsection{Mutual-information dependence}

A worst-case loophole parameter is the right notion for adversarial certification, but it can be unnecessarily conservative when benchmark dependence is concentrated on rare instances. To complement the worst-case parameter \(\eta_{\mathrm{bm}}\), we therefore introduce an average-case information-theoretic measure that quantifies how much information about the benchmark instance \(Z\) is encoded, on average, in the hidden variable \(\Lambda\)~\cite{CoverThomas2006}.

We define the mutual information ($\I(\Lambda;Z)$), which is a measure of the average benchmark information carried by the hidden variable, between \(\Lambda\) and \(Z\) by
\begin{equation}
\I(\Lambda;Z)
=
\sum_z q(z)\,\KL\!\bigl(p(\cdot|z)\,\|\,p_0\bigr),
\label{eq:MI}
\end{equation}
where $\KL(P\|Q)$ is the Kullback--Leibler divergence given by $\sum_{\lambda} P(\lambda)\log\frac{P(\lambda)}{Q(\lambda)}.$ 

To compare with the total-variation formulation, define the average total-variation distance
\begin{equation}
\bar\eta
:=
\sum_z q(z)\,\TV\!\bigl(p(\lambda|z),p_0(\lambda)\bigr).
\label{eq:etabar}
\end{equation}
Applying Pinsker's inequality pointwise,
\begin{equation}
\TV(P,Q)\le \sqrt{\frac12\,\KL(P\|Q)},
\label{eq:pinsker_pointwise}
\end{equation}
and then Jensen's inequality for the concave square-root function, we obtain
\begin{equation}
\bar\eta
\le
\sum_z q(z)\sqrt{\frac12\,\KL\!\bigl(p(\cdot|z)\,\|\,p_0\bigr)}
\le
\sqrt{\frac12\,\I(\Lambda;Z)}.
\label{eq:pinsker}
\end{equation}

Using the general relaxed inequality~\eqref{eq:general_bound}, but averaging over \(z\) rather than taking the supremum over \(z\), yields
\begin{equation}
S\le S_{\mathrm{cl}}+\bar\eta
\le
S_{\mathrm{cl}}+\sqrt{\frac12\,\I(\Lambda;Z)}.
\label{eq:MI_bound}
\end{equation}
Equation~\eqref{eq:MI_bound} is thus an average-case relaxed advantage inequality: the excess score above the loophole-free classical benchmark is controlled by the average information that the hidden variable carries about the benchmark.

This formulation is genuinely weaker than the worst-case bound. The reason is structural: \(\I(\Lambda;Z)\) averages over benchmark instances, whereas \(\eta_{\mathrm{bm}}\) is a supremum over them. Consequently, a rare benchmark instance can exhibit \(O(1)\) total-variation deviation while contributing arbitrarily little to \(\I(\Lambda;Z)\), so \(\eta_{\mathrm{bm}}\) cannot, in general, be bounded in terms of \(\I(\Lambda;Z)\) alone.

\subsection{Standard Bell-derived thresholds}

Here, we convert the abstract benchmark-dependence parameter into an experimentally interpretable loophole threshold for standard Bell-derived tasks. For a given one-round nonlocal-game primitive \cite{Almeida2010}, let \(\omega_{\mathrm c}\) denote the optimal loophole-free classical score, let \(\omega_{\mathrm q}\) denote the optimal entangled (quantum) score, let \(Q_n\) denote the observed score of the \(n\)-round product task, which is the probability of winning all \(n\) rounds simultaneously under the product reward, and let \(\bar Q\) denote the observed average one-round score. Under the roundwise benchmark-dependent model with uniform parameter \(\eta\),~\eqref{eq:product_bound} implies
\begin{equation}
Q_n \le (\omega_{\mathrm c}+\eta)^n.
\label{eq:threshold_start}
\end{equation}
Equivalently, any classical surrogate consistent with the observed product score \(Q_n\) must satisfy
\begin{equation}
\eta \ge \max\!\left\{0,\,Q_n^{1/n}-\omega_{\mathrm c}\right\}.
\label{eq:imperfect_threshold_general}
\end{equation}
Likewise, for the average one-round score, the additive bound gives
\begin{equation}
\eta \ge \max\!\left\{0,\,\bar Q-\omega_{\mathrm c}\right\}.
\label{eq:imperfect_threshold_average}
\end{equation}

This inversion is especially transparent for pseudo-telepathy primitives \cite{BrassardBroadbentTapp2005, HomaBodorBernad2026, Pawela2013, KelleherRoomyHolweck2024}, namely Bell-derived games for which entangled players can win with certainty, \(\omega_{\mathrm q}=1\), while every classical strategy satisfies \(\omega_{\mathrm c}<1\) \cite{Brassard2005,Brunner2014}. In that case, an ideal perfect product-task score \(Q_n=1\) forces
\begin{equation}
\eta \ge 1-\omega_{\mathrm c}.
\label{eq:perfect_threshold_general}
\end{equation}
Thus the threshold is simply the gap between perfect task performance and the best loophole-free classical score.

Among the standard pseudo-telepathy examples (see Table ~\ref{tab:thresholds}), the three-party GHZ/Mermin game has \(\omega_{\mathrm c}=3/4\) and \(\omega_{\mathrm q}=1\) \cite{Greenberger1990,Mermin1990AJP,BrassardBroadbentTapp2005}, while the Mermin--Peres magic-square game has \(\omega_{\mathrm c}=8/9\) and \(\omega_{\mathrm q}=1\) \cite{Mermin1990PRL,Peres1990,Brassard2005}. The CHSH primitive should be treated separately because its optimal entangled value is not unity; rather, \(\omega_{\mathrm q}=\cos^2(\pi/8)\) \cite{Clauser1969,Cleve2004,Brunner2014}. Hence the relevant ideal-score threshold for CHSH follows from Eq.~\eqref{eq:imperfect_threshold_general}, not Eq.~\eqref{eq:perfect_threshold_general}, and is
\begin{equation}
\eta \ge \cos^2\!\left(\frac{\pi}{8}\right)-\frac34 \approx 0.1036.
\label{eq:chsh_threshold}
\end{equation}
This distinction matters operationally: pseudo-telepathy primitives quantify how much benchmark dependence is required to fake \emph{perfect} nonclassical task performance, whereas CHSH quantifies the loophole strength needed to fake the smaller, but still strictly nonclassical, Tsirelson-limited score.

\begin{table}[h]
\caption{Minimum roundwise benchmark dependence required for a
classical surrogate to reproduce the optimal quantum score of each
primitive under the product-task model. For GHZ/Mermin and magic
square, the optimal entangled value is
\(\omega_{\mathrm q}=1\), whereas for CHSH it is
\(\omega_{\mathrm q}=\cos^2(\pi/8)\).}
\label{tab:thresholds}
\begin{ruledtabular}
\begin{tabular}{lcc}
Primitive & \(\omega_{\mathrm c}\) & \(\eta_{\min}^{\mathrm{ideal}}\) \\
\hline
CHSH & \(3/4\) & \(0.1036\) \\
GHZ/Mermin & \(3/4\) & \(1/4\) \\
Magic square & \(8/9\) & \(1/9\) \\
\end{tabular}
\end{ruledtabular}
\end{table}

In particular, for the Mermin--Peres magic-square game, any roundwise benchmark-dependent classical explanation of a perfect repeated product-task score must satisfy
\begin{equation}
\eta \ge \frac19.
\label{eq:magic_square_threshold}
\end{equation}

\subsection{Multi-party extension}

The two-party results are not special to bipartite tasks. The same loophole-robust logic extends directly to distributed tasks with an arbitrary number of parties: increasing the number of agents changes the task-specific classical benchmark, but not the universal way in which benchmark dependence enters the bound. This is important because multipartite Bell scenarios support stronger forms of nonclassicality, including genuine multipartite nonlocality, yet the certification question remains the same: how much benchmark dependence would a classical surrogate need in order to reproduce the observed score?

Consider a \(k\)-party distributed task. In each round, a benchmark instance \(z\) is drawn from a known distribution \(q(z)\), and it specifies local inputs \(x_1(z),\dots,x_k(z)\) delivered to the \(k\) parties. The parties return outputs \(a_1,\dots,a_k\), and their performance is scored by a bounded reward function
\begin{equation}
0 \le r(z,a_1,\dots,a_k) \le 1.
\label{eq:multipartite_reward}
\end{equation}
A benchmark-dependent classical model is then
\begin{equation}
p(a_1,\dots,a_k|z)
=
\sum_\lambda p(\lambda|z)\prod_{j=1}^k p(a_j|x_j(z),\lambda),
\label{eq:multipartite_model}
\end{equation}
where \(\lambda\) is a shared hidden variable and the conditional response \(p(a_j|x_j(z),\lambda)\) of each party depends only on its own local input and on \(\lambda\). Thus, the only departure from the loophole-free model is that the hidden-variable distribution is allowed to depend on the benchmark instance through \(p(\lambda|z)\), exactly as in the bipartite case.

With this factorized \(k\)-party classical structure in place, the proof of~\eqref{eq:general_bound} is unchanged: one again isolates the benchmark-independent reference term and bounds the remainder by total variation. Therefore,
\begin{equation}
S_\eta \le S_{\mathrm{cl}}+\eta,
\label{eq:multipartite_bound}
\end{equation}
where \(S_{\mathrm{cl}}\) is now the optimal loophole-free classical value for the \(k\)-party task. The significance of Eq.~\eqref{eq:multipartite_bound} is that the additive penalty remains exactly \(\eta\), independent of \(k\). In other words, adding more parties can make the task itself more nonclassical by lowering the classical benchmark, but it does not amplify the universal worst-case loophole correction.

The same conclusion holds for repeated product tasks, provided one retains the roundwise benchmark-dependent structure used earlier. If the \(n\)-round task is built from independent repetitions of a one-round \(k\)-party primitive and each round obeys the same dependence bound \(\eta\), then the multiplicative argument also carries over:
\begin{equation}
S_\eta^{(n)} \le (\omega_{\mathrm c}+\eta)^n,
\label{eq:multipartite_product}
\end{equation}. Hence the number of parties enters only through the task-dependent benchmark \(\omega_{\mathrm c}\), not through any change in the universal coefficient multiplying \(\eta\). This clean separation is useful in multipartite settings, where different tasks probe different strengths of nonlocality, from GHZ-type contradictions to Svetlichny-type genuine multipartite nonlocality, but the loophole accounting retains the same operational form.

\subsection{Illustrative certification example: noisy magic-square data.}
To illustrate the finite-sample certification procedure, consider a synthetic noisy realization of the Mermin--Peres magic-square primitive, whose loophole-free classical value is $\omega_{\mathrm c}=\frac89,$
while the ideal entangled value is \(1\)~\cite{Mermin1990PRL,Peres1990}. Since the round outcomes are binary, \(W_j\in\{0,1\}\), the appropriate exact one-sided finite-sample bound is the Clopper--Pearson lower confidence limit~\cite{ClopperPearson1934}.

\emph{Average-score certification:}
Suppose \(N=10^4\) independent rounds yield \(K=9650\) wins, so that the observed average score is $\widehat{\bar Q}=\frac{K}{N}=0.965.$ At confidence level \(1-\alpha=0.95\), the exact one-sided Clopper--Pearson lower bound is $L_N^{\mathrm{CP}}(0.05)\approx 0.9618.$ Applying Eq.~\eqref{eq:imperfect_threshold_average} gives the certified lower bound
\[
\eta_{\min}^{\mathrm{cert}}(0.05)
=
L_N^{\mathrm{CP}}(0.05)-\omega_{\mathrm c}
\approx 0.9618-\frac89
\approx 0.0729.
\]
Thus, with \(95\%\) confidence, any classical surrogate reproducing the observed average score must have benchmark dependence at least \(7.29\times10^{-2}\).

\emph{Product-score certification:}
Now group the data into \(N_{\mathrm{blk}}=2000\) independent blocks of \(n=3\) rounds each, and assign block reward \(1\) only if all three rounds are won. Suppose \(1800\) of the \(2000\) blocks are flawless, giving the observed product score $\widehat Q_3=\frac{1800}{2000}=0.900.$ The exact one-sided \(95\%\) Clopper--Pearson lower bound for the block-success probability is $L_{N_{\mathrm{blk}}}^{\mathrm{CP}}(0.05)\approx 0.8883.$
Equation~\eqref{eq:imperfect_threshold_general} then yields
\[
\eta_{\min}^{\mathrm{cert}}(0.05)
=
\bigl(L_{N_{\mathrm{blk}}}^{\mathrm{CP}}(0.05)\bigr)^{1/3}-\omega_{\mathrm c}
\approx
0.8883^{1/3}-\frac89
\approx 0.0724.
\]
Thus, with \(95\%\) confidence, any classical explanation of the observed three-round flawless-block rate requires benchmark dependence at least \(7.24\times10^{-2}\).

These two calculations use the same certification logic but emphasize different operational notions of success. Average-score certification is statistically more forgiving and is the natural benchmark for noisy experiments, whereas product-score certification directly quantifies the benchmark dependence required to explain exact blockwise success. In either case, the result is a one-sided confidence set $\eta\in[\eta_{\min}^{\mathrm{cert}}(\alpha),1]$
at confidence level \(1-\alpha\).

The certified value of \(\eta\) has a direct operational meaning: for example, \(\eta_{\min}^{\mathrm{cert}}(0.05)\approx 0.0724\) for \(n=3\) means that, with \(95\%\) confidence, any classical surrogate reproducing the observed three-round flawless-block rate must exhibit at least \(7.24\%\) worst-case benchmark dependence per round. In Bell language, where the benchmark instance is the full setting tuple, this implies a Hall-type measurement-dependence parameter \(M\ge 2\eta_{\min}^{\mathrm{cert}}\approx 0.145\), or equivalently a remaining measurement-independence fraction \(F\le 1-\eta_{\min}^{\mathrm{cert}}\approx 0.928\)~\cite{Hall2010,Hall2011,FriedmanEtAl2019}.

\subsection{Experimental finite-sample certification from IBM backends}
\label{subsec:ibm_block_certification}

\begin{figure*}[t]
    \centering
    \includegraphics[width=0.98\textwidth]{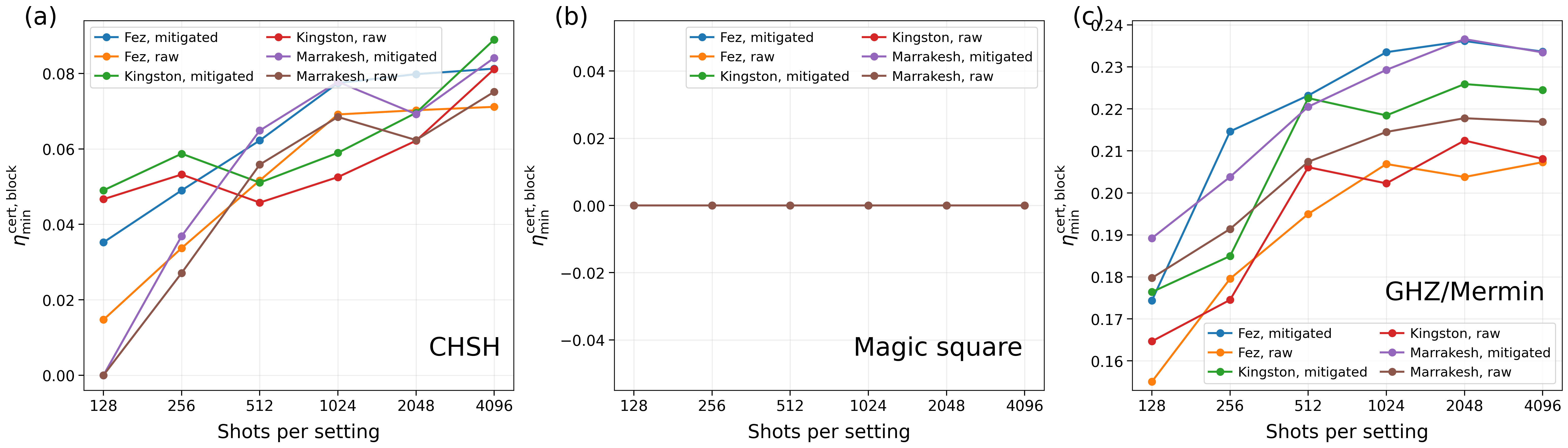}
    \caption{
\textbf{Reconstructed block analysis for Bell-derived benchmark
primitives on IBM quantum backends.}
The exact raw-count benchmark-dependence certificate
and the readout-mitigated plug-in
estimate are shown as functions
of shots per benchmark context for \textbf{(a)} CHSH,
\textbf{(b)} the Mermin--Peres magic-square game, and
\textbf{(c)} the Mermin--GHZ game. Raw-count certificates use
simultaneous one-sided Clopper--Pearson lower limits with Bonferroni
allocation over contexts. Mitigated values are obtained from full
assignment-matrix correction and are displayed as sensitivity estimates,
not as exact-confidence bounds, because the corrected frequencies are
not binomial counts. CHSH gives moderate positive raw certification,
magic square remains uncertified under the nine-context product
reconstruction, and Mermin--GHZ gives the strongest raw and mitigated
performance.
}
    \label{fig:finite_sample_block_certification}
\end{figure*}

We next apply the finite-sample certification framework to data collected
on IBM Fez, IBM Kingston, and IBM Marrakesh for CHSH, the
Mermin--Peres magic-square game, and the Mermin--GHZ game. The
experiments were performed over multiple shot budgets, and the raw and
readout-mitigated results were analyzed separately. Because the exported
data were aggregated by benchmark context rather than retained as a
shot-ordered sequence of rounds, the product-task certification of
Sec.~III.B cannot be implemented as a direct empirical block-success
analysis in the sense of Eq.~\eqref{eq:finite_product_bound}. We therefore
construct a conservative \emph{cycle-product} statistic from the
context-resolved data.

\paragraph*{Hardware implementation and readout mitigation:}
CHSH, Mermin--GHZ, and magic square were implemented using two, three,
and four measured qubits, respectively. For each hardware run, a
representative game circuit was first transpiled at optimization level
\(3\), and the resulting physical-qubit layout was then fixed for every
game and calibration circuit in that run. The game circuits were
transpiled at optimization level \(3\), whereas the computational-basis
calibration circuits were transpiled at optimization level \(1\). All
circuits were executed using IBM Runtime SamplerV2 without additional
Runtime resilience options.

Readout-error mitigation used a full assignment matrix on the measured
register. For a game involving \(n_q\) qubits, all \(2^{n_q}\)
computational-basis states were prepared and measured, producing a
\(2^{n_q}\times 2^{n_q}\) matrix
\begin{equation}
A_{ij}
=
P(\text{measured }i\mid\text{prepared }j).
\label{eq:assignment_matrix}
\end{equation}
Thus, the assignment matrices had dimensions \(4\times4\), \(8\times8\),
and \(16\times16\) for CHSH, Mermin--GHZ, and magic square,
respectively. The calibration circuits used the same fixed physical
qubits as the corresponding game circuits and were included in the same
hardware job. Given a raw count vector \(\mathbf{n}\), the corrected
vector was calculated using the Moore--Penrose pseudoinverse,
\begin{equation}
\widetilde{\mathbf{n}}
=
A^{+}\mathbf{n}.
\label{eq:assignment_matrix_inversion}
\end{equation}
Negative components produced by the inversion were set to zero, after
which the vector was renormalized to the original shot total.

Because the complete assignment matrix was used rather than a tensor
product of independent single-qubit matrices, this procedure can account
for static correlated assignment errors within the measured register.
It does not correct state-preparation or gate errors, nor does it model
time-dependent calibration drift or gate-induced crosstalk. No
zero-noise extrapolation, probabilistic error cancellation, dynamical
decoupling, or explicit drift compensation was applied. Placing the
calibration and game circuits in the same job reduces their temporal
separation, but does not eliminate possible drift within the job.
Accordingly, the difference between the raw and mitigated analyses is
reported as a sensitivity to the readout-correction model rather than as
a complete hardware-error correction.

\paragraph*{Simultaneous context-wise certification:}
For each primitive, let the one-round benchmark contexts be the complete
set of settings entering its game score: four contexts for CHSH, nine
for magic square, and four for Mermin--GHZ. A reconstructed cycle
contains one occurrence of every context. For raw data, let \(K_c\) and
\(N_c\) denote the integer number of wins and total shots in context
\(c\). We calculate an exact one-sided Clopper--Pearson lower confidence
limit
\begin{equation}
L_c(\alpha_c)
=
B^{-1}\!\left(
\alpha_c;K_c,N_c-K_c+1
\right)
\label{eq:context_cp_lower}
\end{equation}
for each context with \(K_c>0\), using \(L_c=0\) when \(K_c=0\).

To obtain simultaneous coverage without assuming statistical
independence among the context-wise estimates, we use the Bonferroni
allocation
\begin{equation}
\alpha_c=\frac{\alpha}{m},
\label{eq:bonferroni_context}
\end{equation}
where \(m\) is the number of contexts. The resulting simultaneous lower
bound on the reconstructed cycle-product score is
\begin{equation}
L_{\mathrm{block}}(\alpha)
:=
\prod_{c=1}^{m}
L_c\!\left(\frac{\alpha}{m}\right).
\label{eq:reconstructed_block_lower_bound}
\end{equation}
The corresponding benchmark-dependence certificate is
\begin{equation}
\eta_{\min}^{\mathrm{cert,block}}(\alpha)
=
\max\!\left\{
0,\,
L_{\mathrm{block}}(\alpha)^{1/m}
-\omega_{\mathrm c}
\right\}.
\label{eq:reconstructed_eta_block}
\end{equation}
Equation~\eqref{eq:reconstructed_eta_block} is used as a formal
finite-sample certificate only for the raw integer counts. Readout
mitigation produces fractional corrected frequencies that depend on the
finite calibration sample, matrix inversion, nonnegativity clipping, and
renormalization. These corrected frequencies are not binomial counts, so
applying a Clopper--Pearson formula to rounded or truncated mitigated
counts would not preserve its exact-coverage guarantee.

We therefore report the mitigated results separately as a sensitivity
analysis. Let \(\widetilde p_c\) denote the readout-corrected success
frequency for context \(c\). We define the mitigated reconstructed-block
plug-in quantities
\begin{equation}
\widetilde G_{\mathrm{block}}
:=
\prod_{c=1}^{m}\widetilde p_c,
\qquad
\widetilde{\eta}_{\min}^{\mathrm{mit,block}}
:=
\max\!\left\{
0,\,
\widetilde G_{\mathrm{block}}^{1/m}
-\omega_{\mathrm c}
\right\}.
\label{eq:mitigated_block_plugin}
\end{equation}
The tilde distinguishes this readout-corrected point estimate from the
confidence-certified raw-count quantity
\(\eta_{\min}^{\mathrm{cert,block}}\). A confidence-certified mitigated
analysis would additionally require propagation of both calibration and
game-shot uncertainty, for example through a calibration-aware
parametric bootstrap.

Table~\ref{tab:ibm_block_summary} summarizes the strongest exact
raw-count reconstructed-block certificate obtained for each primitive.
For CHSH, the largest certified value occurs on IBM Kingston with
4096 shots per context. The simultaneous one-sided lower bound is
\(
L_{\mathrm{block}}^{95\%}=0.4772,
\)
which gives
\(
\eta_{\min}^{\mathrm{cert,block}}=0.08115.
\)
This remains below the ideal CHSH threshold
\(\cos^2(\pi/8)-3/4\approx0.1036\), but establishes that any
benchmark-dependent classical surrogate reproducing the observed
raw-count cycle-product performance requires more than \(8.1\%\)
worst-case benchmark dependence at the stated confidence level.

For the magic-square game, every exact raw-count reconstructed-block
certificate is zero. The largest value of the simultaneous block lower
bound is obtained on IBM Fez with 128 shots per context,
\(
L_{\mathrm{block}}^{95\%}=0.02083,
\)
but its ninth root remains below the loophole-free classical value
\(\omega_{\mathrm c}=8/9\). This does not imply that the observed
one-round scores are classically trivial. Rather, the nine-context
product construction is sufficiently stringent that the simultaneous
finite-sample lower bound does not establish positive block-level
benchmark dependence.

Mermin--GHZ gives the strongest exact certificate. The largest raw-count
value is obtained on IBM Marrakesh with 2048 shots per context:
\(
L_{\mathrm{block}}^{95\%}=0.8772,
\qquad
\eta_{\min}^{\mathrm{cert,block}}=0.21777.
\)
Thus, with \(95\%\) confidence, any classical surrogate reproducing this
reconstructed cycle-product performance requires benchmark dependence
of at least \(21.8\%\). The certificate is below, but comparatively close
to, the ideal pseudo-telepathy threshold
\(1-\omega_{\mathrm c}=1/4\).

Readout mitigation systematically raises the corresponding plug-in
performance estimates for CHSH and Mermin--GHZ. The largest mitigated
value for CHSH is obtained on IBM Kingston with 4096 shots per context,
for which
\(
\widetilde{\eta}_{\min}^{\mathrm{mit,block}}=0.10190.
\)
For Mermin--GHZ, the largest mitigated value is obtained on IBM Fez with
2048 shots per context,
\(
\widetilde{\eta}_{\min}^{\mathrm{mit,block}}=0.24194.
\)
Across all backends and shot budgets, readout correction increases the
plug-in value by an average of \(8.12\times10^{-3}\) for CHSH and
\(1.49\times10^{-2}\) for Mermin--GHZ; the largest Mermin--GHZ increase
is \(2.78\times10^{-2}\). Magic square remains below the positive
block-threshold condition in both the raw and mitigated analyses. These
mitigated quantities quantify sensitivity to the readout model and
should not be interpreted as exact \(95\%\) confidence bounds.

The conclusions are insensitive to the particular simultaneous-interval
construction. For the strongest raw CHSH run, the Bonferroni
Clopper--Pearson, \v{S}idák Clopper--Pearson, Bonferroni Wilson, and
Bonferroni Agresti--Coull analyses give similar values $\simeq$ 0.0811. For the strongest raw Mermin--GHZ run, the corresponding values are $\simeq$ 0.217. All four constructions give zero for magic square. We retain the
Bonferroni Clopper--Pearson result as the primary certificate because it
does not require independence among context-wise estimates and retains
exact one-sided binomial coverage for the raw counts.

Figure~\ref{fig:finite_sample_block_certification} compares the exact
raw-count certificates with the readout-mitigated plug-in sensitivity
estimates over all backends and shot budgets. CHSH exhibits consistent
positive raw certification, while the corrected point estimates approach
the ideal CHSH threshold. Magic square remains uncertified under the
context-complete product reconstruction. Mermin--GHZ is the most robust
primitive under the present experimental conditions, combining a perfect
quantum target, a relatively low classical value, and only four benchmark
contexts.

These results should be interpreted together with the average-score analysis of Sec.~III.B. The average-score certification is statistically less stringent and therefore better aligned with noisy experimental implementations, while the reconstructed cycle-product certification reported here probes a stronger notion of loophole robustness. In particular, the IBM data already support nontrivial block-level certification for CHSH and especially for Mermin--GHZ, while also showing that the full magic-square product structure remains beyond reach under the present noise levels and finite sample sizes. From the standpoint of loophole-robust advantage claims, this separation is useful: it shows not only whether a primitive exceeds its loophole-free classical score on average, but also how far current hardware can sustain that advantage under a stricter finite-sample product-style benchmark.

\begin{table}[t]
\centering
\caption{Strongest exact reconstructed cycle-product certificates obtained
from the raw IBM hardware counts. Here \(m\) is the number of benchmark
contexts entering one reconstructed block,
\(L_{\mathrm{block}}^{95\%}\) is the simultaneous one-sided \(95\%\)
Clopper--Pearson lower bound obtained using Bonferroni allocation, and
\(\eta_{\min}^{\mathrm{cert,block}}\) is defined by
Eq.~\eqref{eq:reconstructed_eta_block}. Readout-mitigated results are
reported separately as plug-in sensitivity estimates because the
matrix-corrected frequencies are not binomial counts.}
\label{tab:ibm_block_summary}
\begin{tabular}{lccccc}
\hline
Primitive & Backend & Shots/context & \(m\) &
\(L_{\mathrm{block}}^{95\%}\) &
\(\eta_{\min}^{\mathrm{cert,block}}\) \\
\hline
CHSH & Kingston & 4096 & 4 & 0.4772 & 0.08115 \\
Magic square & Fez & 128 & 9 & 0.02083 & 0 \\
Mermin--GHZ & Marrakesh & 2048 & 4 & 0.8772 & 0.21777 \\
\hline
\end{tabular}
\end{table}

\subsection{Explicit benchmark-dependent classical surrogates}
\label{subsec:explicit_surrogates}

For the CHSH and Mermin--GHZ primitives, the additive relaxed bound
\(S_\eta \le \omega_c+\eta\) is not merely an abstract ceiling: it can be
saturated by an explicit benchmark-dependent classical model.

\paragraph*{CHSH:}
Let the benchmark instance be \(z=(x,y)\in\{00,01,10,11\}\), uniformly
distributed, with win condition \(a\oplus b=xy\). Consider four deterministic
strategies \(s_\lambda\), labeled by \(\lambda\in\{00,01,10,11\}\), each of
which loses on exactly one setting:
\begin{equation}
\begin{array}{c|c|c|c}
\lambda & a(x) & b(y) & \text{loses on} \\
\hline
11 & 0 & 0 & 11 \\
10 & x & 0 & 10 \\
01 & 0 & y & 01 \\
00 & 1-x & y & 00
\end{array}
\end{equation}
Let the hidden variable be \(\lambda\), with benchmark marginal
\(p_0(\lambda)=1/4\), and define
\begin{equation}
p(\lambda|z)=
\begin{cases}
\frac14-\eta, & \lambda=z,\\[1ex]
\frac14+\frac{\eta}{3}, & \lambda\neq z,
\end{cases}
\qquad 0\le \eta\le \frac14.
\end{equation}
Then \(d_{\mathrm{TV}}(p(\lambda|z),p_0)=\eta\) for every \(z\). Since the
strategy \(s_\lambda\) loses only when \(z=\lambda\), the conditional success
probability is \(
P(\mathrm{win}\mid z)=1-p(\lambda=z|z)=\frac34+\eta\). Hence the average score is \(S_\eta^{\mathrm{CHSH}}=\frac34+\eta.\)

\paragraph*{Mermin--GHZ:}
For the standard GHZ game with promised inputs
\(z\in\{000,011,101,110\}\), uniformly distributed, and winning condition
\begin{equation}
a\oplus b\oplus c =
\begin{cases}
0, & z=000,\\
1, & z\in\{011,101,110\},
\end{cases}
\end{equation}
consider four deterministic strategies \(s_\lambda\), each losing on exactly
one promised input:
\begin{equation}
\begin{array}{c|c|c|c|c}
\lambda & a(x) & b(y) & c(z) & \text{loses on} \\
\hline
000 & 1 & 0 & 0 & 000 \\
011 & x & 0 & 0 & 011 \\
101 & x & 0 & z & 101 \\
110 & x & y & 0 & 110
\end{array}
\end{equation}
With the same benchmark-dependent law
\begin{equation}
p(\lambda|z)=
\begin{cases}
\frac14-\eta, & \lambda=z,\\[1ex]
\frac14+\frac{\eta}{3}, & \lambda\neq z,
\end{cases}
\qquad 0\le \eta\le \frac14,
\end{equation}
one again has \(d_{\mathrm{TV}}(p(\lambda|z),p_0)=\eta\), and since each
strategy loses only on its labeled input,
\(
P(\mathrm{win}\mid z)=\frac34+\eta.
\)
Therefore
\(
S_\eta^{\mathrm{GHZ}}=\frac34+\eta.
\)

Thus, for both primitives, an explicit benchmark-dependent classical surrogate
exists whose performance increases linearly with \(\eta\) and exactly saturates
the relaxed average-score bound. In particular, for any observed average score
\(Q\), choosing \(\eta_{\mathrm{match}}=Q-\omega_c\) yields a classical
surrogate with \(S_{\eta_{\mathrm{match}}}=Q\), while choosing
\(\eta^\star=\eta_{\min}^{\mathrm{cert,avg}}+\delta\) with \(\delta>0\) yields
a surrogate whose score lies just above the certified lower-confidence quantum
performance.

\paragraph*{Magic square:}
The same construction extends to the Mermin--Peres magic-square game.
Let the benchmark instance be
\(z=(x,y)\in\{1,2,3\}\times\{1,2,3\}\), uniformly distributed over the
nine input pairs. The loophole-free classical value is
\(\omega_{\mathrm c}=8/9\). For each cell \((i,j)\), there exists a
deterministic classical strategy \(s_{ij}\) that loses only on the input
\((x,y)=(i,j)\) and wins on the remaining eight input pairs. Let the
hidden variable be \(\lambda=(i,j)\), with benchmark marginal
\(p_0(\lambda)=1/9\).

For \(0\leq\eta\leq 1/9\), define
\begin{equation}
p(\lambda|z)=
\begin{cases}
\displaystyle \frac{1}{9}-\eta,
& \lambda=z,\\[1ex]
\displaystyle \frac{1}{9}+\frac{\eta}{8},
& \lambda\neq z.
\end{cases}
\label{eq:magic_square_surrogate}
\end{equation}
The restriction \(\eta\leq 1/9\) ensures that all probabilities in
Eq.~\eqref{eq:magic_square_surrogate} are nonnegative. Moreover,
\begin{equation}
d_{\mathrm{TV}}\!\bigl(p(\lambda|z),p_0\bigr)
=
\frac{1}{2}
\left(
\eta+8\frac{\eta}{8}
\right)
=
\eta.
\label{eq:magic_square_surrogate_tv}
\end{equation}
Because \(s_\lambda\) loses only when \(z=\lambda\), its conditional
success probability is
\begin{equation}
P(\mathrm{win}|z)
=
1-p(\lambda=z|z)
=
\frac{8}{9}+\eta.
\label{eq:magic_square_surrogate_score}
\end{equation}
At \(\eta=1/9\), the strategy therefore wins with certainty. For every
larger allowed dependence \(\eta\geq 1/9\), the same perfect-score model
remains admissible because its actual benchmark dependence is only
\(1/9\). Consequently,
\begin{equation}
S_{\eta}^{\mathrm{MS}}
=
\min\!\left\{
1,\frac{8}{9}+\eta
\right\}.
\label{eq:magic_square_surrogate_optimum}
\end{equation}
Thus, the magic-square game also admits an explicit
benchmark-dependent classical surrogate that saturates the additive
relaxed bound over its full admissible range.

\subsection{Non-Bell benchmark-construction dependence in an ad hoc quantum-feature task}

The preceding examples are Bell-derived, but the relaxed advantage
inequality applies to any bounded-reward task.  We therefore next
consider a non-Bell supervised-learning benchmark based on the
quantum-enhanced feature-space construction of Havlíček \emph{et al.}
\cite{Havlicek2019}.  This example is useful for a different purpose
than the Bell-derived games.  It is not intended as a natural-data
demonstration of practical quantum advantage.  Instead, it provides a
controlled test of benchmark-construction dependence: the labels are
generated from the same quantum feature-space structure used by the
quantum kernel, so construction-side variables can be explicitly
recorded and audited.

From a classical machine-learning perspective, this audit is related to
tests for target leakage and shortcut learning
\cite{KaufmanEtAl2012,GeirhosEtAl2020}. The distinction is that the present
framework treats access to the unintended cue as an explicit classical
resource. Rather than only observing that a shortcut exists, we compare
the dependence carried by the candidate construction variable with the
minimum dependence required to explain the observed score gap, and then
test its operational usefulness using an explicit classical surrogate.

Each benchmark instance is a labeled example $z=(u,y)$,
where \(u\in[0,2\pi)^2\) is the official input feature vector and
\(y\in\{0,1\}\) is the class label used only for scoring.  The classifier outputs \(a=\hat y\), and the reward is the ordinary classification reward $r(z,a)=\mathbf 1\{a=y\}$.

Thus, the score is the held-out test accuracy, $S[p]=\mathbb E_{(u,y)\sim q}\left[\mathbf 1\{\hat y(u)=y\}\right]$, which is a bounded-reward task of the form considered in Sec.~II.

For the quantum classifier, we used a fidelity quantum kernel $K_Q(u,u') = \left|\langle \Phi(u)|\Phi(u')\rangle\right|^2$,
where $|\Phi(u)\rangle = U_\phi(u)|0\rangle^{\otimes 2}$
is generated by a two-qubit \(ZZ\)-feature map with two repetitions and
full entanglement.  On hardware, each kernel entry was estimated from
circuits of the form $U_\phi(u')^\dagger U_\phi(u)$, followed by computational-basis measurement.  The all-zero outcome
frequency estimates \(K_Q(u,u')\).  The diagonal training entries were
fixed to unity, consistent with \(K_Q(u,u)=1\), while all off-diagonal
and test-training entries were estimated from hardware counts.

The ad hoc labels were generated from a latent construction score
\begin{equation}
    g(u)=\langle\Phi(u)|O|\Phi(u)\rangle,
\end{equation}

where $O=V^{\dagger}\left(\prod_{j=1}^{n} Z_j\right)V$
is a fixed generator observable used only in constructing the benchmark. Here \(Z_j\) is the Pauli-\(Z\) operator on qubit \(j\), \(\prod_{j=1}^{n} Z_j\) is the computational-basis parity observable, and \(V\) is a fixed random unitary chosen during dataset generation. For the two-qubit task used here, this reduces to $O=V^{\dagger}(Z\otimes Z)V$. Thus, \(g(u)\) is the expectation value of a fixed rotated parity observable in the quantum feature state \(|\Phi(u)\rangle\). The classifier is given only the official input \(u\); \(O\), \(g(u)\), and quantities derived from \(g(u)\) are retained only for the benchmark-construction audit.

We used a separation gap
\(\Delta=0.25\): points with \(g(u)>\Delta\) were assigned to class
\(y=1\), points with \(g(u)<-\Delta\) were assigned to class \(y=0\), and
points satisfying \(|g(u)|\leq\Delta\) were rejected.  This procedure
creates a deliberately engineered task aligned with the quantum feature
map.  It also exposes natural construction-side candidate hidden
variables, including
\begin{equation}
\lambda_{\mathrm{sign}}=\mathrm{sign}\,g(u),
\end{equation}
a coarse signed-margin bin, a coarse absolute-margin bin, and a random
control variable.

For the finite held-out benchmark, let \(\widehat q\) denote the uniform
empirical distribution over the \(40\) test examples. For a categorical
construction-side variable \(\Lambda\), the direct empirical analogue of
the formal benchmark-dependence parameter in Eq.~\eqref{eq:eta_bm} would
be
\begin{equation}
\widehat{\eta}_{\mathrm{bm},\lambda}
=
\max_{(u,y)\in\mathcal T}
d_{\mathrm{TV}}
\!\left(
\widehat p(\lambda|u,y),
\widehat p_0(\lambda)
\right),
\label{eq:empirical_full_instance_eta}
\end{equation}
where \(\mathcal T\) is the held-out test set and
\(\widehat p_0(\lambda)\) is its empirical marginal.

This full-instance empirical quantity is not, however, an informative
diagnostic of class-relevant construction leakage in the present
benchmark. Each continuous input \(u\) occurs only once, and the
candidate variables are fixed functions or fixed assignments associated
with that input. Consequently,
\(\widehat p(\lambda|u,y)\) is a point mass. For a deterministic
categorical variable \(\lambda=h(u)\),
\begin{equation}
\widehat{\eta}_{\mathrm{bm},\lambda}
=
1-
\min_{\ell\in\operatorname{supp}\Lambda}
\widehat p_0(\ell),
\label{eq:deterministic_full_instance_eta}
\end{equation}
so the result is governed primarily by the least frequent category.
It can therefore assign substantial full-instance dependence even to a
random control that has negligible relation to the class label.

We instead report the label-conditioned construction diagnostic
\begin{equation}
\widehat{\eta}_{\lambda}^{(Y)}
=
\max_{y\in\{0,1\}}
d_{\mathrm{TV}}
\!\left(
\widehat p(\lambda|y),
\widehat p_0(\lambda)
\right).
\label{eq:label_conditioned_eta}
\end{equation}
This quantity measures the component of construction-side dependence
that is directly associated with the prediction target. It is not
identical to the formal full-instance parameter. Rather, it is a
coarse-grained lower bound on it. Indeed,
\begin{equation}
\widehat p(\lambda|y)
=
\sum_u
\widehat p(u|y)\,
\widehat p(\lambda|u,y),
\end{equation}
and convexity of total variation gives
\begin{align}
d_{\mathrm{TV}}
\!\left(
\widehat p(\lambda|y),
\widehat p_0(\lambda)
\right)
&\leq
\sum_u
\widehat p(u|y)\,
d_{\mathrm{TV}}
\!\left(
\widehat p(\lambda|u,y),
\widehat p_0(\lambda)
\right)
\nonumber\\
&\leq
\widehat{\eta}_{\mathrm{bm},\lambda}.
\end{align}
Therefore,
\begin{equation}
\widehat{\eta}_{\lambda}^{(Y)}
\leq
\widehat{\eta}_{\mathrm{bm},\lambda}.
\label{eq:coarse_eta_lower_bound}
\end{equation}
If
\(\widehat{\eta}_{\lambda}^{(Y)}>\eta_{\mathrm{req}}\), then the
candidate side resource necessarily has full benchmark dependence at
least as large as the amount required by the universal score bound.
This comparison alone is a necessary-resource audit, not a proof that
the variable can reproduce the quantum score. Operational sufficiency
is tested separately by training classical shortcut classifiers with
explicit access to \(\lambda\).

\begin{figure*}[t]
    \centering
    \includegraphics[width=0.95\textwidth]{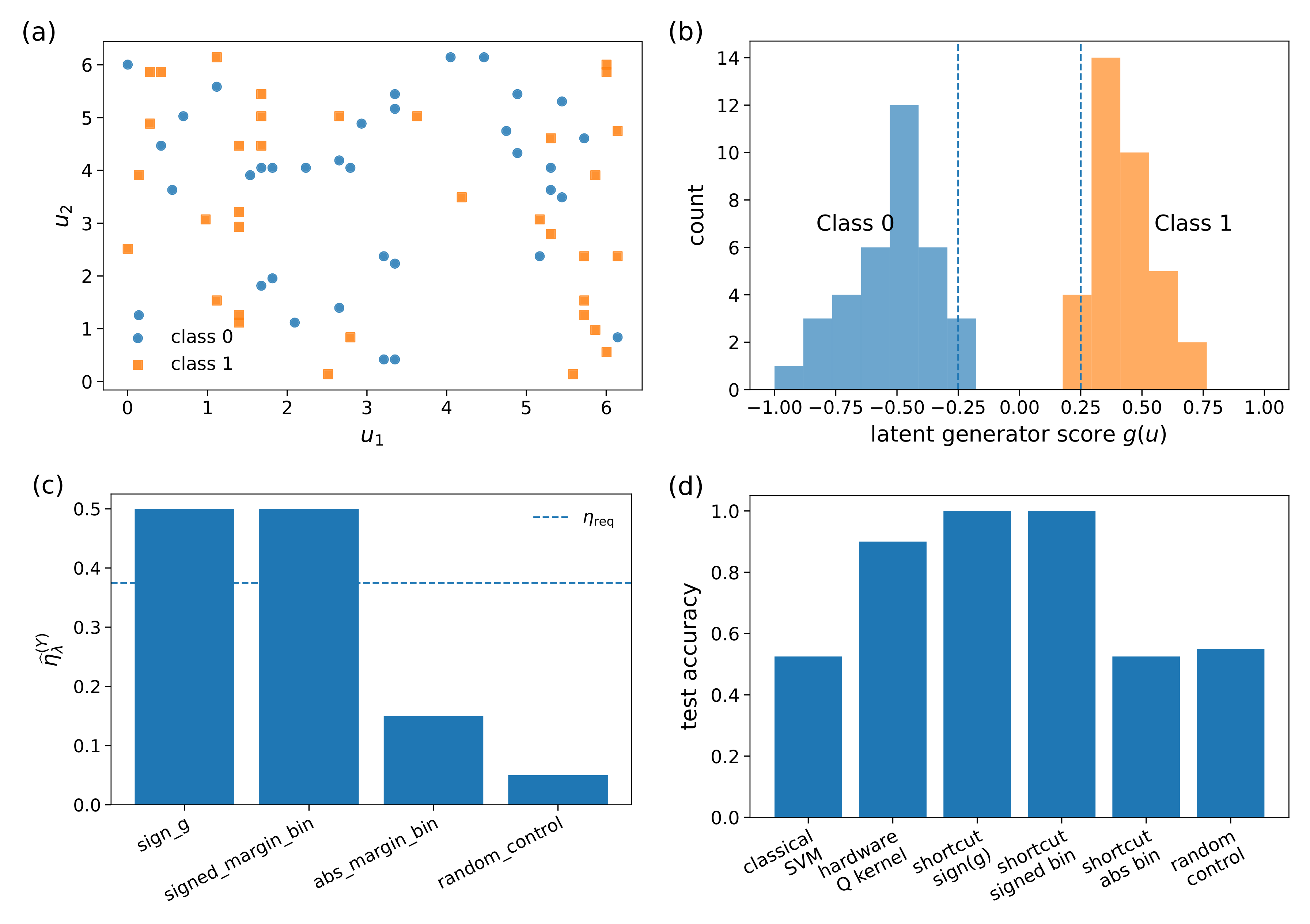}
    \caption{
    \textbf{Benchmark-construction dependence in a non-Bell ad hoc
    quantum-feature task.}
    \textbf{A,} Held-out and training examples from the engineered
    two-dimensional ad hoc benchmark, colored by class.
    \textbf{B,} Distribution of the latent generator score
    \(g(u)=\langle\Phi(u)|O|\Phi(u)\rangle\) by class.  The dashed
    vertical lines mark the separation gap \(\pm\Delta\), with
    \(\Delta=0.25\).  Labels are assigned by the sign of \(g(u)\) outside
    this rejected boundary region.
    \textbf{C,} Label-conditioned construction-side dependence
\(\widehat{\eta}_{\lambda}^{(Y)}\) for the candidate variables. This
coarse-grained quantity measures dependence associated with the class
label and lower-bounds the corresponding formal full-instance benchmark
dependence. The dashed horizontal line shows
\(\eta_{\mathrm{req}}=\max\{0,S_Q-S_{\mathrm{cl}}\}=0.375\).
The variables \(\mathrm{sign}\,g(u)\) and the signed-margin bin exceed
this required dependence.
    \textbf{D,} Held-out classification scores for the loophole-free
    classical SVM comparator, the hardware-estimated quantum kernel, and
    shortcut classifiers using construction-side variables.  Shortcut
    classifiers using \(\mathrm{sign}\,g(u)\) or the signed-margin bin
    reproduce or exceed the hardware quantum-kernel score.
    }
    \label{fig:adhoc_construction_dependence}
\end{figure*}
We implemented this task on the IBM
\texttt{ibm\_marrakesh} backend using \(1024\) shots per kernel circuit
(Fig.~\ref{fig:adhoc_construction_dependence}). The training set
contained \(15\) examples per class, and the held-out test set contained
\(20\) examples per class
(Figs.~\ref{fig:adhoc_construction_dependence}a and
\ref{fig:adhoc_construction_dependence}b).

The construction-side audit shows that the apparent excess is not
loophole-robust. Both \(\mathrm{sign}\,g(u)\) and the signed-margin bin
have
\[
\widehat{\eta}_{\lambda}^{(Y)}=0.5.
\]
This value follows directly from the label-generation rule. In the
retained ad hoc dataset, examples with \(g(u)>\Delta\) are assigned to
\(y=1\), whereas examples with \(g(u)<-\Delta\) are assigned to \(y=0\).
For
\(\lambda=\mathrm{sign}\,g(u)\), the class-conditional distribution is
therefore deterministic:
\begin{equation}
\widehat p(\lambda=+|y=1)=1,
\qquad
\widehat p(\lambda=-|y=1)=0,
\end{equation}
with the assignments reversed for \(y=0\). Because the held-out
benchmark is balanced,
\begin{equation}
\widehat p_0(\lambda=+)
=
\widehat p_0(\lambda=-)
=
\frac{1}{2}.
\end{equation}
Hence
\begin{align}
d_{\mathrm{TV}}
\!\left(
\widehat p(\lambda|y=1),
\widehat p_0(\lambda)
\right)
&=
\frac{1}{2}
\left(
\left|1-\frac{1}{2}\right|
+
\left|0-\frac{1}{2}\right|
\right)
\nonumber\\
&=
0.5.
\end{align}

The same reasoning applies to the signed-margin bin. Although it has
more than two categories, every retained negative-margin category occurs
only for \(y=0\), and every retained positive-margin category occurs
only for \(y=1\). It therefore retains the class label exactly and also
has
\(\widehat{\eta}_{\lambda}^{(Y)}=0.5\)
(Fig.~\ref{fig:adhoc_construction_dependence}c). As an average-case
cross-check, both signed variables satisfy
\begin{equation}
\widehat I(\Lambda;Y)
=
H(Y)
=
\log 2,
\end{equation}
or one bit, because the balanced binary label is a deterministic function
of either variable. By contrast, the absolute-margin bin has
\(\widehat{\eta}_{\lambda}^{(Y)}=0.15\), and the random control has only
\(\widehat{\eta}_{\lambda}^{(Y)}=0.05\). These latter quantities are
descriptive properties of the fixed finite held-out benchmark.
\begin{table*}[t]
\centering
\caption{
\textbf{Non-Bell ad hoc quantum-feature benchmark and construction-side
dependence audit.}
The task is a two-qubit Havlíček-style ad hoc classification benchmark
implemented with a hardware-estimated fidelity quantum kernel on
\texttt{ibm\_marrakesh}.  The official benchmark input is \(u\), while
construction-side variables \(\lambda\) are derived from the latent
generator score \(g(u)=\langle\Phi(u)|O|\Phi(u)\rangle\).  The score gap requires any benchmark-dependent classical explanation to
satisfy
\(\eta_{\mathrm{bm}}\geq
\eta_{\mathrm{req}}=
\max\{0,S_Q-S_{\mathrm{cl}}\}=0.375\).
The reported
\(\widehat{\eta}_{\lambda}^{(Y)}\) is the label-conditioned
construction diagnostic defined in
Eq.~\eqref{eq:label_conditioned_eta}; it is a lower bound on the formal
full-instance dependence of the candidate variable.
}
\label{tab:adhoc_construction_dependence}
\begin{tabular}{lcccc}
\hline
Model or construction variable & Input to classifier &
\(\widehat{\eta}_{\lambda}^{(Y)}\) &
Held-out score &
Interpretation \\
\hline
Classical SVM comparator & \(u\) only & -- & \(0.525\) &
Loophole-free classical comparator \\
Hardware quantum kernel & \(u\) only & -- & \(0.900\) &
Observed non-Bell quantum-kernel score \\
\(\mathrm{sign}\,g(u)\) & \(\lambda\) only & \(0.500\) & \(1.000\) &
Construction-side shortcut \\
Signed-margin bin & \(\lambda\) only & \(0.500\) & \(1.000\) &
Coarse construction-side shortcut \\
\(|g(u)|\) margin bin & \(\lambda\) only & \(0.150\) & \(0.525\) &
Boundary-strength variable only \\
Random control & \(\lambda\) only & \(0.050\) & \(0.550\) &
Negative control \\
\hline
\end{tabular}
\end{table*}

The hardware-estimated quantum-kernel support vector machine (SVM) achieved $S_Q=0.900$, whereas the best pre-specified classical SVM comparator, selected from
linear, polynomial, and Gaussian radial basis function (RBF) kernels using training data only,
was a polynomial-kernel SVM with $S_{\mathrm{cl}}=0.525$ (Fig. ~\ref{fig:adhoc_construction_dependence}d).
The apparent non-Bell excess was therefore \(
S_Q-S_{\mathrm{cl}}=0.375.\)
By the universal relaxed inequality, any benchmark-dependent classical
surrogate reproducing the hardware quantum-kernel score must have
\begin{equation}
\eta_{\mathrm{bm}}
\geq
\eta_{\mathrm{req}}
:=
\max\{0,S_Q-S_{\mathrm{cl}}\}
=
0.375.
\label{eq:qml_required_dependence}
\end{equation}

The signed construction variables satisfy
\(
\widehat{\eta}_{\lambda}^{(Y)}
=
0.5
>
\eta_{\mathrm{req}},
\)
and Eq.~\eqref{eq:coarse_eta_lower_bound} implies that their corresponding
full-instance benchmark dependence is also at least \(0.5\). They
therefore possess more than the minimum dependence budget required by
the universal bound. Whether that dependence is operationally useful is
determined by the shortcut-classifier analysis below.

The paired held-out predictions show that the quantum kernel was correct
on \(36/40\) test examples, while the classical SVM was correct on
\(21/40\).  On the same test examples, the quantum kernel was correct and
the classical SVM was wrong in \(17\) cases, while the classical SVM was
correct and the quantum kernel was wrong in \(2\) cases.  The paired
score difference is 0.375, which is consistent with the direct score difference above.

Following the construction-side audit, we also trained classical SVM surrogates with explicit access to the
construction-side variables.  This is not a fair classical benchmark;
rather, it is a loophole-enabled classical model designed to test
whether side information can operationally reproduce the quantum-kernel
score.  A classical SVM given \(\mathrm{sign}\,g(u)\) alone achieved
perfect test accuracy.  The same was true for a classical SVM given the
continuous latent generator score \(g(u)\).  Thus, in this engineered
task,
Let \(S_{\mathrm{cl}}^{(\lambda)}\) denote the held-out score of a
classical surrogate given explicit access to the construction-side
variable \(\lambda\). For the signed variables,
\begin{equation}
S_{\mathrm{cl}}
<
S_Q
<
S_{\mathrm{cl}}^{(\lambda)}
=
1.
\label{eq:qml_shortcut_ordering}
\end{equation}
for the relevant construction-side variables.  This is the operational
signature of benchmark-construction dependence: the quantum kernel
outperforms the loophole-free classical comparator on the official input,
but a classical surrogate with access to construction-side information
can reproduce or exceed the quantum-kernel performance.
A summary of ad hoc quantum-feature benchmark and construction-side dependence audit results are
shown in Table~\ref{tab:adhoc_construction_dependence}. The latent generator score separates the two classes by construction.
The signed construction-side variables have
\(\widehat{\eta}_{\lambda}^{(Y)}>\eta_{\mathrm{req}}\), and the
corresponding shortcut classifiers outperform both the loophole-free
classical comparator and the hardware quantum-kernel classifier.
Together, the dependence audit and the explicit shortcut performance
show that the apparent quantum-kernel excess is explainable by
benchmark-construction side information.

This experiment should therefore be interpreted as a benchmark audit,
not as a claim of practical quantum advantage.  The ad hoc task is
intentionally favorable to the quantum feature map, and the same
construction introduces latent variables that are strongly correlated
with the benchmark labels.  The value of the present framework is that it
makes this fragility quantitative.  The observed hardware quantum-kernel
excess requires \(\eta_{\mathrm{req}}=0.375\) benchmark dependence to
explain, while the signed construction-side variables have
\(\widehat{\eta}_{\lambda}=0.5\) and enable classical surrogates with
perfect test accuracy.  Thus, even a positive non-Bell quantum-kernel
result can fail to be loophole-robust once benchmark-construction side
resources are explicitly audited.

\section{Conclusion}

We introduced benchmark dependence as a task-level resource for evaluating
the loophole robustness of claimed quantum advantage. For a hidden state
whose conditional law may depend on the evaluated benchmark instance, the
optimal classical score of any bounded-reward task satisfies
\[
S_\eta
\leq
\min\!\left\{
1,S_{\mathrm{cl}}+\eta
\right\}.
\]
The total-variation stability estimate entering the proof is classical;
the contribution of the present framework is its operational use for
quantum-advantage certification. It identifies benchmark-correlated side
information as an explicit classical resource, optimizes over the resulting
strategy class, and converts an observed score gap into the minimum
dependence required by a classical explanation. The hidden-instance
construction proves that the unit coefficient of \(\eta\) is universally
optimal.

Additional task structure permits sharper conclusions. For repeated
product tasks under roundwise benchmark dependence, we obtained
\[
S_\eta^{(n)}
\leq
(\omega_{\mathrm c}+\eta)^n,
\]
with exact saturation for an explicit family. We also treated
average-of-rounds scores, finite samples, multipartite tasks,
mutual-information dependence, and correlated hidden trajectories. The
multiplicative result requires roundwise factorization; when arbitrary
inter-round correlations are allowed, the pathwise theorem supplies a
more conservative but generally valid alternative.

The hardware analysis illustrates the distinction between formal
certification and error-mitigation sensitivity. Using simultaneous
one-sided Clopper--Pearson limits on the raw integer counts, the strongest
reconstructed cycle-product certificates are
\(\eta_{\min}^{\mathrm{cert,block}}=0.08115\) for CHSH on IBM Kingston
and \(0.21777\) for Mermin--GHZ on IBM Marrakesh. The magic-square
cycle-product remains uncertified because its nine-context product is much
more demanding. Assignment-matrix mitigation raises the corresponding
point estimates, but the corrected frequencies are not binomial counts;
we therefore report those results as readout-model sensitivity estimates
rather than exact-confidence certificates.

The non-Bell quantum-kernel example shows why the resource model matters
outside Bell tests. The hardware quantum kernel exceeds the pre-specified
classical comparator on the official input, but the benchmark labels are
generated from a latent quantum-feature-space score. Signed variables
derived from that score have
\(\widehat{\eta}_{\lambda}^{(Y)}=0.5>\eta_{\mathrm{req}}=0.375\) and
support classical shortcut classifiers with perfect held-out accuracy.
This example is not a claim of practical quantum advantage; it is a
controlled demonstration that an apparent advantage can disappear when
benchmark-construction side information is audited explicitly.

The resulting certification principle is therefore stronger than the
statement that a selected classical comparator has been exceeded. A
loophole-robust claim should specify the permitted information resources,
report the benchmark-independent classical value, and quantify the
minimum benchmark dependence required by any classical surrogate capable
of reproducing the observed score. This extends the operational logic of
relaxed Bell tests to distributed quantum tasks and quantum-machine-learning
benchmarks.

\section*{Data Availability Statement}

The data and analysis code supporting the findings of this study, including
the raw and readout-mitigated IBM quantum-hardware results and the ad hoc
quantum-kernel benchmark data, are publicly available at
\url{https://github.com/ramkrip/quantadvantage/}.

\bibliographystyle{apsrev4-2}  % PRR/APS-compatible style
\bibliography{refs}   

\end{document}